\newcounter{hours}
\newcounter{minutes}
\newcommand{\Printtime}{\setcounter{hours}{\time/60}%
\setcounter{minutes}{\time-\value{hours}*60}%
\thehours:%
\ifthenelse{\value{minutes}<10}{0}{}\theminutes}
\newtheorem{xdefinition}{Definition}
\newtheorem{xobservation}{Observation}
\newtheorem{xtheorem}{Theorem}
\newtheorem{xlemma}{Lemma}
\newtheorem{xproposition}{Proposition}
\newtheorem{xcorollary}{Corollary}
{\hspace*{\fill}\raisebox{-1pt}{\boldmath$\Box$}\end{xdefinition}}
{\hspace*{\fill}\raisebox{-1pt}{\boldmath$\Box$}\end{xobservation}}
\newenvironment{theorem}{\begin{xtheorem}\rm}{\end{xtheorem}}
\newenvironment{lemma}{\begin{xlemma}\rm}{\end{xlemma}}
\newenvironment{proposition}{\begin{xproposition}\rm}{\end{xproposition}}
\newenvironment{corollary}{\begin{xcorollary}\rm}{\end{xcorollary}}
\newenvironment{proof}{\begin{trivlist}\item[]{\bf Proof }}%
{\hspace*{\fill}\raisebox{-1pt}{\boldmath$\Box$}\end{trivlist}}
\newcommand{\ALG}{\ensuremath{\operatorname{\textsc{A}}}\xspace}
\newcommand{\ALGB}{\ensuremath{\operatorname{\textsc{B}}}\xspace}
\newcommand{\OPT}{\ensuremath{\operatorname{\textsc{Opt}}}\xspace}
\newcommand{\DNF}{\ensuremath{\operatorname{\textsc{DNF}}}\xspace}
\newcommand{\nextfit}{\ensuremath{\operatorname{\textsc{Next-Fit}}}\xspace}
\newcommand{\harm}{{\ensuremath{\textsc{Harmonic}_k}}\xspace}
\newcommand{\Har}{{\ensuremath{\textsc{DHarmonic}_k}}\xspace}
\newcommand{\DHk}{{\ensuremath{\textsc{DH}_k}}\xspace}
\newcommand{\DHone}{{\ensuremath{\textsc{DH}_{1}}}\xspace}
\newcommand{\DHtwo}{{\ensuremath{\textsc{DH}_{2}}}\xspace}
\newcommand{\DHi}{{\ensuremath{\textsc{DH}_i}}\xspace}
\newcommand{\DHj}{{\ensuremath{\textsc{DH}_j}}\xspace}
\newcommand{\SET}[1]{\left\{#1\right\}}
\newcommand{\SETOF}[2]{\{#1 \mid #2\}}
\newcommand{\SEQ}[1]{\langle #1 \rangle}
\newcommand{\WR}{\ensuremath{\mathrm{WR}}\xspace} 
\newcommand{\CR}[1]{\ensuremath{\mathrm{CR}(#1)}\xspace}  
\newcommand{\CRab}[1]{\ensuremath{\mathrm{CR}_{a,b}(#1)}\xspace} 
\newcommand{\CRabu}{\ensuremath{\mathrm{CR}_{a,b}}\xspace} 
\newcommand{\RWOR}[2]{\ensuremath{\WR(#1,#2)}\xspace}
\newcommand{\PROB}[1]{\ensuremath{\operatorname{Prob}[#1]}}
\newcommand{\EXP}[1]{\ensuremath{\operatorname{E}[#1]}}
\newcommand{\EXPDIST}[2]{\ensuremath{\operatorname{E}_{#1}[#2]}}
\newcommand{\RO}[1]{\ensuremath{\mathrm{RR}(#1)}\xspace}
\newcommand{\MINV}[1]{\ensuremath{\mathrm{MR_{vol}}(#1)}}
\newcommand{\vol}[1]{\ensuremath{\textit{vol}(#1)}\xspace}
\newcommand{\FLOOR}[1]{\left\lfloor#1\right\rfloor}
\newcommand{\MINinline}[1]{\min\{#1\}}
\newcommand{\RDNF}{\ensuremath{R_{(0,\frac{1}{k})}}\xspace} 
\newcommand{\RHar}{\ensuremath{R_{[\frac{1}{k},1)}}\xspace}
\newcommand{\RDNFtwo}{\ensuremath{R_{(0,\frac{1}{2})}}\xspace} 
\newcommand{\RHartwo}{\ensuremath{R_{[\frac{1}{2},1)}}\xspace}
\newcommand{\gf}{\ensuremath{\psi_1}\xspace}
\newcommand{\ERU}[1]{\ensuremath{\mathrm{ER_U}(#1)}\xspace} 
\newcommand{\p}{p}
\newcommand{\bin}{\ensuremath{\beta}}
\newcommand{\eps}{\ensuremath{\varepsilon}\xspace}
\title{Online Bin Covering: Expectations vs.\ Guarantees\,\thanks{A
preliminary version of this paper appeared in the
proceedings of the Seventh Annual International Conference on
Combinatorial Optimization and Applications, 2013.
Supported in part by the Danish Council for Independent Research
and the Villum Foundation.}}
\author{Marie G. Christ \hspace{2em} Lene M. Favrholdt \hspace{2em} Kim S. Larsen \\[1ex]
        University of Southern Denmark \\
        Odense, Denmark \\[1ex]
        {\tt \{christm,lenem,kslarsen\}@imada.sdu.dk}}
\date{February 27, 2014}
\begin{document}

\maketitle

\begin{abstract}
Bin covering is a dual version of classic bin packing.
Thus, the goal is to cover as many bins as possible, where covering a bin
means packing items of total size at least one in the bin.

For online bin covering, competitive analysis fails to distinguish between
most algorithms of interest; 
all ``reasonable'' algorithms have a competitive ratio of $\frac12$.
Thus,
 in order to get a better understanding of the combinatorial difficulties in solving this problem,
 we turn to other
performance measures, namely relative worst order, random order, and
max/max analysis, as well as
analyzing input with restricted or uniformly distributed item sizes.
In this way, our study also supplements the ongoing systematic studies of
the relative strengths of various performance measures.

Two classic algorithms for online bin packing that have natural dual
 versions are \harm and \nextfit. 
Even though the algorithms are quite different in nature, 
the dual versions are
 not separated by competitive analysis.
%
We make the case that when guarantees are needed, even under restricted
input sequences, dual \harm is preferable. In addition, we
establish quite robust theoretical results showing that if items come
from a uniform distribution or even if just the ordering of items is uniformly
random, then dual \nextfit is the right choice.
\end{abstract}

\section{Introduction}
Bin covering~\cite{AJKL84j} is a dual version of classic bin packing.
As usual, bins have
size one and items with sizes between zero and one must be packed. However,
in bin covering, the objective is to cover as many bins as possible,
where a bin is covered if the sizes of items placed in the bin sum up
to at least one.
We are considering the online version of bin covering. A problem
is online if the input sequence is presented to the algorithm one item
at a time, and the algorithm must make an irrevocable decision regarding
the current item without knowledge of future items.

Bin covering algorithms have numerous important applications.
For instance, when packing or canning food items guaranteeing a
minimum weight or volume, reductions in the overpacking of even a
few percent may have a large economic impact.
If items arrive on a conveyor belt, for instance, the problem becomes online.

Classic algorithms for online bin packing are \nextfit and the
parameterized family \harm~\cite{HL85}.
\nextfit is a very simple and natural algorithm, and \harm was designed to
obtain a competitive ratio~\cite{ST85,KMRS88} better than any Any-Fit
algorithm (First-Fit and Best-Fit are examples of Any-Fit
algorithms for bin packing, and the competitive ratio of Next-Fit is
worse than both these algorithms).
\harm and variations of it have been analyzed extensively~\cite{RBLL89,W93,S02}.
We consider the obvious dual version of these,
\DNF~\cite{AJKL84j} and \DHk~\cite{CW98}.
These algorithms are quite different in nature
and the bin packing versions are clearly separated, 
having competitive ratios of $2$ and approximately $1.691$, respectively.
However, for bin covering,
competitive analysis does not distinguish between them!
In fact, for bin covering, competitive analysis categorizes both
algorithms as being optimal among deterministic algorithms, but also worst
possible among ``reasonable'' algorithms for the problem.
This is unlike the situation in bin packing, and in general, results
from bin packing do not transfer directly to bin covering.

To understand the algorithmic differences better,
it is therefore necessary to employ different techniques,
and we turn to other generally applicable performance measures, namely
relative worst order analysis, random order analysis, and
max/max analysis.
As for almost all performance measures, the idea is to abstract away
some details of the problem to enable comparisons. Without some abstraction,
it is hard to ever, analytically, claim that one algorithm
is better than another, since almost any algorithm performs better
than any other algorithm on at least one input sequence.
For all the measures considered here, the abstraction can be viewed
as being defined via first a partitioning of the set of input sequences of a given length and then
an aggregation of the results from each partition.
For each sequence length, competitive analysis, for instance,
considers all the ratios of the online performance to the optimal
offline performance obtained for each sequence of that length,
and then takes the worst ratio of all of these.
The measures above employ a less fine-grained partitioning of the
input space.
Worst order and random order analysis group permutations of the same
sequence together instead of considering each sequence separately,
deriving worst-case or average-case performance, respectively, within each partition.
With max/max analysis the partitioning of the input space is even
coarser: for each sequence length $n$, the online worst-case behavior
over all sequences of length $n$ is compared to the worst-case optimal
offline behavior over all sequences of length $n$.
There is no one correct way to compare algorithms,
but since these measures focus on different aspects of algorithmic behavior,
considering all of the ones above lead to a very broad analysis of the problem.
Extensive motivational sections can be found in the papers introducing
these measures and in the survey~\cite{DL05}.
As a further supplement, we analyze restricted input sequences, where
items have similar size, which is likely to happen in practice if one
is packing products with an origin in nature, for instance.
Finally, we consider input sequences containing items having uniformly distributed sizes.

Relative worst order analysis~\cite{BF07j,BFL07j} has been applied to
many problems; a recent list can be found
in~\cite{EKL13j}.
In~\cite{EFK12}, bin covering was analyzed, but using
a version of the problem allowing items of size~1.
We analyze the more commonly studied version for bin covering,
where all items are
strictly smaller than 1. Since worst-case sequences from~\cite{EFK12}
contain items of size~1, this leads to slightly different results.
For completeness, we include these results.
Random order analysis~\cite{K96}
was introduced for classic bin packing, but has also been used
for other problems; a server problem, for instance~\cite{BIL09p}.
Max/max analysis~\cite{BB94} was introduced
as an early step towards refining the results from competitive
analysis for paging and a server problem.


Relative worst order analysis emphasizes the fact that there
exist multisets of input items where \DNF can perform
$\frac{3}{2}$ times as poorly as \DHk. On the other hand,
\DHk's method of limiting the worst-case also means that it has less
of an opportunity to reach the best case, as opposed to \DNF.
This is reflected in the random order analysis, where \DNF
comes out at least as well as \DHk.
Another way of approaching randomness is to analyze a uniform distribution.
We establish new results on \DHk showing that its performance here
is slightly worse than that of \DNF, in line with the random order results.
With the max/max analysis, a distinction between the two algorithms 
can only be achieved, when the item sizes are limited, and
\DHk is the algorithm selected as best by this measure.
With respect to competitive analysis,
we also consider restricted input in the sense that item sizes may
only vary across one or two consecutive \DHk partitioning points. This is a
formal way of treating the case where items are of similar size,
while allowing greater variation when this size is large. We show
that with this restricted form of input, considering the
worst-case measures of competitive analysis, \DHk is deemed
better than \DNF, as \DNF is more vulnerable to
worst-case sequences. 

This study also contributes to the ongoing systematic studies of
the relative strengths of various performance measures,
initiated in~\cite{BIL09p}. Up until that paper, most performance
measures were introduced for a specific problem to 
overcome the limitations of
competitive analysis. In~\cite{BIL09p}, comparisons
of performance measures different from competitive analysis
were initiated, and this line of work has been continued
in~\cite{BGL12p,BLM12p,BGL13}, among others.
Our results supplement results in~\cite{C88},
showing that
no deterministic algorithm for the bin covering problem
can be better than $\frac{1}{2}$-competitive and
giving an asymptotically optimal algorithm for the case of items
being uniformly distributed on $(0,1)$.
For \DNF, \cite{CFGK91} established an expected competitive
ratio of $\frac{2}{e}$ under the same conditions.

In the following, we formally define the bin covering problem and the
algorithms \DNF and \DHk, the performance of which we compare under
different performance measures. The performance measures themselves are
defined in each their section.
We conclude on our findings in the final section.

\subsection*{Bin Covering}
In the one dimensional bin covering problem, the algorithm gets an input
sequence $I=\SEQ{i_1,i_2,\dots}$ of item sizes, where for all $j$, $0 < i_j < 1$.
The items are to be packed in bins of size 1.
A bin is {\em covered}, if items of total size at least 1 have been
 packed in it, and
the goal is to cover as many bins as possible.

Requiring items to be strictly smaller than 1 corresponds to assuming
 that items of size 1 are treated separately.
This makes sense, since there is no advantage in combining an item of
 size 1 with any other items in a bin.
In other words, any algorithm not giving special treatment to items
of size 1 could trivially be improved by doing so.

For a bin covering algorithm \ALG, we let $\ALG(I)$ denote the number
 of covered bins when given the sequence $I$ of items. 
We let $\OPT$ denote an optimal offline algorithm.
Thus, $\OPT(I)$ is the largest number of bins that can be covered
by any algorithm processing $I$.

In algorithms for bin packing and covering, it is standard to use the
following terminology.
A bin that has received at least one item is \emph{open} if it may
receive more items, and \emph{closed}
if the algorithm will not consider
that bin again for future items.

\subsubsection*{The Dual Next-Fit algorithm}
Assmann, Johnson, Kleitman, and Leung~\cite{AJKL84j} introduced the
Dual \nextfit algorithm 
(\DNF), an adaptation of the \nextfit algorithm for bin packing.
\DNF always keeps at most one open bin. 
When a new item arrives, it is packed in the currently open bin, if
 any.
Otherwise, a new bin is opened.
A bin is closed when it has received items of total size at least one.

\subsubsection*{The Dual Harmonic algorithm}
The algorithm \harm was introduced for bin packing by Lee and Lee~\cite{HL85}. This algorithm
partitions the interval $(0,1)$ into $k$ subintervals,
with the partitioning points at $\frac{1}{2},\frac{1}{3},\dots, \frac{1}{k}$,
resulting in the different sized intervals
$(0,\frac{1}{k}],(\frac{1}{k},\frac{1}{k-1}],\dots,(\frac{1}{2},1)$.
\harm packs items from each of
these $k$ subintervals in separate bins.
This means that each closed bin for the interval~$(\frac{1}{j},\frac{1}{j-1}]$ 
contains exactly $j$ items.
The natural adaptation to the bin covering problem is to use the intervals
$$\left(0,\frac{1}{k}\right),\left[\frac{1}{k},\frac{1}{k-1}\right),\dots,\left[\frac{1}{2},1\right)\,.$$
The resulting algorithm, \Har (\DHk), uses exactly $j$ items from the
interval~$[\frac{1}{j},\frac{1}{j-1})$ to cover a bin.
All through the paper we assume that $k\geq 2$, since for $k=1$,
\DHk becomes \DNF.

\section{Competitive Analysis}
\label{competitive-analysis}
In competitive analysis~\cite{ST85,KMRS88}, the performance of an online
algorithm is compared to that of an optimal offline algorithm \OPT. 
An algorithm \ALG for a maximization problem is called
\emph{$c$-competitive} if there exists a fixed 
constant~$b$ such that for any input sequence $I$,
it holds that $\ALG(I)\geq c\OPT(I) + b$.
The supremum over all such $c$ is the \emph{competitive ratio} \CR{\ALG} of \ALG.
Note that some authors reverse the order of the algorithm and \OPT to get
ratios larger than one.

For bin covering, Csirik and Totik~\cite{C88} showed that no deterministic
online algorithm can be better than $\frac{1}{2}$-competitive.
\DNF was shown to be $\frac{1}{2}$-competitive in~\cite{AJKL84j},
and the same result for \DHk was noted in~\cite{EFK12}.
For completeness, to show that this result is tight for a
large class of algorithms,
we define a \emph{reasonable} algorithm to be one that closes bins as
soon as they are covered, does not close bins before
they are covered, and does not have more than a constant number of open
bins at any point.
\begin{theorem}
\label{competitive-reasonable}
Any deterministic reasonable algorithm has a competitive ratio of
$\frac{1}{2}$.
\end{theorem}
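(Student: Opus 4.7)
The plan is to establish the competitive ratio by combining a known upper bound with a new volume-counting lower bound. The upper bound $\CR{\ALG} \le \frac{1}{2}$ is immediate from the result of Csirik and Totik~\cite{C88} cited in the excerpt: no deterministic online bin-covering algorithm can be better than $\frac{1}{2}$-competitive, so in particular no reasonable algorithm can.

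For the matching lower bound I would use a volume argument. Let $c$ be the constant upper bound on the number of simultaneously open bins guaranteed by reasonableness, fix an arbitrary input sequence $I$, and set $m = \ALG(I)$. By the first two reasonableness conditions, each of the $m$ closed bins contains items of total volume strictly less than $2$: immediately before its final item arrived it held volume strictly less than $1$ (otherwise it would already have been closed), and the final item itself has size strictly less than $1$. Moreover, no bin is closed before being covered, so every bin that ever received an item is either one of these $m$ covered bins or one of the at most $c$ open bins remaining at termination, and each open bin holds volume strictly less than $1$. Writing $S$ for the total volume of items in $I$, we obtain $S < 2m + c$, and since every bin covered by $\OPT$ requires volume at least $1$ we also have $\OPT(I) \le S$. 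Combining, $\ALG(I) > \frac{1}{2}\OPT(I) - \frac{c}{2}$, which yields $\CR{\ALG} \ge \frac{1}{2}$ with additive constant $-c/2$.

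I expect the main obstacle to be purely conceptual, namely checking that all three reasonableness conditions are genuinely needed in the volume argument. Dropping ``closes bins as soon as they are covered'' would let a single closed bin accumulate arbitrary volume and destroy the per-bin bound of $2$; dropping ``does not close bins before they are covered'' would permit unboundedly many closed-but-uncovered bins, each a private reservoir of up to one unit of volume that does not translate into a bin counted in $\ALG(I)$; and dropping the constant bound on the number of open bins would let the residual-volume slack at the end grow with the input length rather than remain an additive constant. Once these three points are in place, the chain $\OPT(I) \le S < 2\ALG(I) + c$ is immediate and the theorem follows.
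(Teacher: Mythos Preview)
Your proof is correct and essentially identical to the paper's: both cite Csirik and Totik for the upper bound and, for the lower bound, observe that each covered bin holds volume strictly less than~$2$ while the constantly many open bins contribute only an additive constant, yielding $\OPT(I) < 2\ALG(I) + c$. The additional paragraph explaining why each of the three reasonableness conditions is needed is sound and more explicit than the paper's treatment, but the underlying argument is the same.
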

\begin{proof}
The upper bound follows from~\cite{C88}.
For the lower bound, note that
the only item that can overfill a bin is the last item to go into that
bin, by the definition of a reasonable algorithm. Since that item has
size less than one, all bins will contain items of total size
less than two. Thus, \OPT could not cover more than twice as many bins,
using items from the closed bins.
Being reasonable also means that there are only a constant number of
open bins, so the items in there can only enable \OPT to cover
an additive constant of further bins. Thus,
no reasonable algorithm can be worse than $\frac{1}{2}$-competitive.
\end{proof}

\subsection{Limiting the item sizes}
In some applications of the bin covering problem it is likely that
the sizes of the items contained in an input sequence differ only
slightly, e.g., packing similar food items into a container, guaranteeing
the consumer a minimum weight.
In the following, we investigate the performance of \DNF and
\DHk on sequences with similar-sized items.
Since it seems reasonable to allow larger variance in size when the
considered sizes are large,
we consider sequences containing item sizes from two or three
consecutive \DHk intervals. 

We first consider intervals $(a,b)\subseteq
(0,1)$ that contain exactly one \DHk partitioning point.
Afterwards, we consider sequences with exactly two \DHk partitioning points.
We emphasize that there are no restrictions on the endpoints $a$ and $b$,
which can be any real numbers,
as long as the interval between them contains exactly one or two \DHk partitioning points.
In both cases, \DHk turns out to have the better ratio.

\begin{proposition}
\label{proposition-basic-dnf}
For any $x\in\mathbb{N}$, $x \geq 2$, and $\varepsilon > 0$,
$\CRab{\DNF} \leq \frac{x}{x+1}$, even if we
only consider items in the range
$[\frac{1}{x}-\eps,\frac{1}{x}+\eps]$.
\end{proposition}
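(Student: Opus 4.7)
My plan is to construct an adversarial input sequence whose items lie in $[\frac{1}{x}-\eps,\frac{1}{x}+\eps]$ and on which the ratio $\DNF(I)/\OPT(I)$ is driven down to approximately $\frac{x}{x+1}$.

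First, I would pick a small parameter $\delta > 0$ with $(x-1)\delta \leq \eps$, and work with two item sizes: ``small'' items of size $s = \frac{1}{x} - \delta$ and ``large'' items of size $l = \frac{1}{x} + (x-1)\delta$. Both values lie in the required range. The key identities chosen into the construction are $(x-1)s + l = 1$, so that \OPT can exactly cover a bin using one large item together with $x-1$ small items, and $xs = 1 - x\delta < 1$, so that $x$ consecutive small items are not enough to cover a bin.

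Next, I would take the input $I_N$ to consist of $N$ copies of the block $\SEQ{s,s,\dots,s,l}$ with $x$ copies of $s$ followed by one $l$, and trace \DNF's behaviour block by block. In each block, \DNF opens a bin, accumulates the $x$ small items (partial sum $1-x\delta < 1$, bin stays open), then adds the large item (new sum $\frac{x+1}{x} - \delta$, which exceeds $1$ provided $\delta$ is chosen small enough), closing the bin. Thus every block yields exactly one \DNF bin of $x+1$ items, giving $\DNF(I_N) = N$.

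Finally, I would lower-bound $\OPT(I_N)$ by describing an offline packing: group each large item with $x-1$ small items into $N$ bins of sum exactly $1$, then group the remaining $N$ small items into $\lfloor N/(x+1)\rfloor$ further bins consisting of $x+1$ small items each (each valid since $(x+1)s > 1$ for small $\delta$). This already yields $\OPT(I_N)\ge N + \lfloor N/(x+1)\rfloor$, and the ratio tends to at most $\frac{x+1}{x+2}$ as $N\to\infty$. The main obstacle is sharpening this to the claimed $\frac{x}{x+1}$: this requires either a more delicate sequence design so that \DNF's open bins at block boundaries are never closed efficiently by subsequent items, or a better \OPT packing that combines the residual small items with additional large items so that \OPT's bin count approaches $\frac{|I|}{x}$ rather than $\frac{|I|(x+2)}{(x+1)^2}$. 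With that tightening in hand, applying the definition of $\CRabu$ (with its additive constant) delivers the desired inequality $\CRab{\DNF} \leq \frac{x}{x+1}$.
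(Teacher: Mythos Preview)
Your construction yields only the weaker bound $\frac{x+1}{x+2}$, and the ``tightening'' you gesture at cannot be carried out with the items you have chosen. With the two sizes $s=\frac{1}{x}-\delta$ and $l=\frac{1}{x}+(x-1)\delta$ in ratio $x:1$, \OPT genuinely cannot average $x$ items per covered bin: each bin containing only $x$ items must include at least one large item (since $xs<1$), and there are only $N$ large items. Once those $N$ bins are formed, the remaining $N$ small items require $x+1$ per bin. So $\OPT(I_N)\leq N+\lfloor N/(x+1)\rfloor$ is essentially tight for your sequence, and the ratio is stuck at $\frac{x+1}{x+2}$.

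The idea you are missing is to introduce a third item size, namely $\frac{1}{x}$ exactly, so that \OPT's leftover items also pack $x$ to a bin with zero waste. The paper takes the sequence
\[
\SEQ{\SEQ{\tfrac{1}{x}}^{x-1},\ \tfrac{1}{x}-\eps,\ \tfrac{1}{x}+\eps}^{xn}.
\]
Each block still forces \DNF to use $x+1$ items (the first $x$ sum to $1-\eps$), so $\DNF=xn$. But now \OPT pairs one small, one large, and $x-2$ items of size $\frac{1}{x}$ into a bin of total exactly $1$; doing this $xn$ times consumes all small and large items and leaves $xn$ items of size $\frac{1}{x}$, which form $n$ further bins of $x$ items each. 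Hence $\OPT=(x+1)n$ and the ratio is exactly $\frac{x}{x+1}$.
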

\begin{proof}
Consider the sequence 
$\SEQ{\SEQ{\frac{1}{x}}^{x-1}, \frac{1}{x} - \eps, \frac{1}{x} + \eps}^{xn}$.
For this sequence, \DNF covers only $xn$ bins, whereas \OPT can
place exactly one small and one large item in each bin, filling up with
items of size $\frac{1}{x}$, to cover $(x+1)n$ bins.
\end{proof}

For any $(a,b) \subseteq (0,1)$, we let
\CRabu denote the competitive ratio on
sequences where all item sizes are in $(a,b)$.

If $(a,b)$ does not contain at least one of the interval borders
used by \DHk, then \DHk packs exactly like \DNF.
If $(a,b)$ contains a \DHk border, then we define 
 $$\frac{1}{\p} = 
 \max\left\{\frac{1}{l} \left\vert l \in \mathbb{N}, \frac{1}{l}
 < b\right.\right\},$$
and refer to $\frac{1}{\p}$ as the {\em maximal border in $(a,b)$}.

Note that if $(a,b)$ contains exactly one of the interval borders
used by \DHk, then $\frac{1}{\p+1}\leq a<\frac{1}{\p}$.
The next two theorems and the corollary deal with this case.

\begin{theorem}\label{lemma-DNF-2int}
If $\frac{1}{\p+1}\leq a<\frac{1}{\p}$, then 
\[\CRab{\DNF} = \frac{\p}{\p+1}\]
\end{theorem}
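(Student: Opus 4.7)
The plan is to establish the equality by showing matching upper and lower bounds on $\CRab{\DNF}$. For the upper bound, I would invoke Proposition~\ref{proposition-basic-dnf} with $x = \p$. The hypothesis $a < \frac{1}{\p}$, together with the assumption that $\frac{1}{\p}$ lies in $(a,b)$ (so $b > \frac{1}{\p}$), lets me choose $\eps > 0$ with $\eps < \min\{\frac{1}{\p}-a,\, b-\frac{1}{\p}\}$, which forces the adversarial sequence of the proposition to lie entirely in $(a,b)$; this yields $\CRab{\DNF} \leq \frac{\p}{\p+1}$ immediately.

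For the lower bound, my plan is a two-sided item-counting argument based on the restricted item range. Because $(a,b)$ contains only the single border $\frac{1}{\p}$, the maximality of $\frac{1}{\p}$ among such borders forces $b \leq \frac{1}{\p-1}$; combined with $a \geq \frac{1}{\p+1}$, every item lies strictly between $\frac{1}{\p+1}$ and $\frac{1}{\p-1}$. From this I would derive two matching per-bin bounds:
\begin{itemize}
\item Each covered bin in any schedule, and in particular each bin of \OPT, contains at least $\p$ items: $\p-1$ items each strictly less than $\frac{1}{\p-1}$ sum to strictly less than $1$.
\item Each bin closed by \DNF contains at most $\p+1$ items, and the single open leftover bin contains at most $\p$ items: $\p+1$ items each strictly greater than $\frac{1}{\p+1}$ already sum to more than $1$, so \DNF would close the bin before any further item is added.
\end{itemize}
Writing $n = |I|$ for the total number of items, these give $n \geq \p\cdot\OPT(I)$ on one side and $n \leq (\p+1)\DNF(I) + \p$ on the other. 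Rearranging yields $\DNF(I) \geq \frac{\p}{\p+1}\OPT(I) - \frac{\p}{\p+1}$, which matches the additive-constant definition of competitive ratio and gives $\CRab{\DNF} \geq \frac{\p}{\p+1}$.

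The only real subtlety is recognizing that the one-border assumption forces the upper bound $b \leq \frac{1}{\p-1}$, since this is precisely what lifts the item-per-\OPT-bin lower bound from the trivial value $2$ up to $\p$. Without this observation, volume-based reasoning gives only the $\frac{1}{2}$ bound of Theorem~\ref{competitive-reasonable}, and item counting using merely $n \geq 2\,\OPT(I)$ recovers the stated ratio only in the case $\p = 2$.
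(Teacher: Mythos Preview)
Your proof is correct and follows exactly the approach of the paper: the upper bound comes from Proposition~\ref{proposition-basic-dnf} with $x=\p$, and the lower bound is the item-counting argument that every bin needs at least $\p$ items (since items are smaller than $\frac{1}{\p-1}$) while \DNF closes each bin after at most $\p+1$ items (since items exceed $\frac{1}{\p+1}$). The paper states this in a single sentence, whereas you have spelled out the details and explicitly noted that the one-border assumption gives $b\le\frac{1}{\p-1}$, which is indeed the key observation.
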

\begin{proof}
The lower bound follows directly from the fact that it takes at 
least~$\p$ and at most $\p+1$ items to cover a bin,
and the upper bound follows from Proposition~\ref{proposition-basic-dnf}.
\end{proof}

\begin{theorem}\label{lemma-Har-2int}
If $\frac{1}{\p+1} \leq a < \frac{1}{\p}$ and $k \geq \p$, then 
\[\CRab{\DHk} = \frac{\p^2+1}{\p(\p+1)}\]
\end{theorem}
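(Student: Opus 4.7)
The plan is to establish matching upper and lower bounds on $\CRab{\DHk}$. For the upper bound, I would exhibit an adversarial family of sequences. The key insight is that while \DHk rigidly uses $p$ items per covered large-bin and $p+1$ items per covered small-bin, an optimal packing can cover a bin using just one large item together with $p-1$ small items, provided their sizes are chosen so the total reaches exactly $1$. Concretely, for a sufficiently small $\eta>0$ I would take large items of size $\frac{1}{p}+(p-1)\eta$ and small items of size $\frac{1}{p}-\eta$ (both in $(a,b)$ and on the correct sides of $\frac{1}{p}$), and present a sequence consisting of $N=p(p+1)M$ large items followed by $N(p-1)$ small items. Then \OPT covers $N$ bins, one large together with $p-1$ small items per bin summing to exactly $1$, while \DHk covers only $\lfloor N/p \rfloor+\lfloor N(p-1)/(p+1)\rfloor = (p^2+1)M$ bins; letting $M\to\infty$ gives $\CRab{\DHk}\le \frac{p^2+1}{p(p+1)}$.

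For the lower bound, let $l$ and $s$ denote the numbers of items from $[\frac{1}{p},b)$ and $[a,\frac{1}{p})$, respectively, so that $\DHk\ge l/p + s/(p+1) - 2$. To bound $\OPT$, I would use two observations: since $(a,b)$ contains only the single partitioning point $\frac{1}{p}$, every item has size strictly less than $\frac{1}{p-1}$, so every covered bin contains at least $p$ items; and a bin with exactly $p$ items cannot consist solely of items from $[a,\frac{1}{p})$, since $p$ such items sum to less than one, so each such bin contains at least one item from $[\frac{1}{p},b)$. Letting $K=\OPT$ and $K_p$ be the number of its bins with exactly $p$ items, these observations yield $K_p\le l$ and $pK_p+(p+1)(K-K_p)\le l+s$, which combine to give the two upper bounds $K\le (l+s)/p$ and $K\le (2l+s)/(p+1)$.

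The main obstacle is that neither of these bounds on $K$ alone suffices to prove $l/p+s/(p+1)\ge \frac{p^2+1}{p(p+1)}K$ for all $l$ and $s$; the simple volume bound is also too weak. I would split the analysis at $s=(p-1)l$. For $s\le(p-1)l$ the bound $K\le(l+s)/p$ works, with the desired inequality reducing algebraically to the case assumption $l(p-1)\ge s$. For $s\ge(p-1)l$ the refined bound $K\le(2l+s)/(p+1)$ applies, and the inequality reduces to $(p-1)(s-(p-1)l)\ge 0$. In both regimes the algebraic reduction matches exactly the case assumption, so the bound holds throughout; accounting for the two floor losses gives $\DHk\ge \frac{p^2+1}{p(p+1)}\OPT - 2$, which establishes $\CRab{\DHk}\ge \frac{p^2+1}{p(p+1)}$ and completes the proof.
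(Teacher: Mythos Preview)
Your proof is correct and follows essentially the same approach as the paper: the upper-bound construction is identical (one item just above $\frac{1}{p}$ combined with $p-1$ items just below, summing to exactly~1), and the lower bound rests on the same two combinatorial facts (every \OPT bin contains at least $p$ items, and a bin with exactly $p$ items must contain a large item), yielding the same pair of bounds $K\le(l+s)/p$ and $K\le(2l+s)/(p+1)$. The only cosmetic difference is the pivot of the case split---the paper splits on $l<\OPT$ versus $l\ge\OPT$, whereas you split on $s\lessgtr(p-1)l$---but both organizations lead to the same inequality.
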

\begin{proof}
We consider the lower bound first.
Since $k \geq \p$, \DHk packs the items of size larger than or equal
 to $\frac{1}{\p}$ in separate bins.
For any sequence $I$, let $t$ denote the total number of items in $I$ and
let $\ell$ denote the number of items of size larger than or equal to
$\frac{1}{\p}$.
Then, \DHk covers at least
$\lfloor\frac{\ell}{\p}\rfloor+\lfloor\frac{t-\ell}{\p+1}\rfloor>\frac{\ell+t\p}{\p(\p+1)}-2$
bins. 
Thus, letting $n=\OPT(I)$, we obtain 
$$\CRab{\DHk} \geq \frac{\ell+t\p}{n\p(\p+1)}.$$

We treat this in two cases:

\textbf{Case $\ell < n$:} At least $n-\ell$ bins covered by \OPT contain more than $\p$~items.
Hence, $t\geq n\p + n - \ell$,
and
\begin{align*}
\CRab{\DHk}
& \geq \frac{\ell+t\p}{n\p(\p+1)}\\
& \geq \frac{\ell+n \p^2+n\p-\ell\p}{n \p (\p+1)}\\
& =    \frac{n\p^2 + n + n(\p-1) - \ell(\p-1)}{n \p (\p+1)}\\
& >    \frac{n\p^2+n}{n \p (\p+1)}, \text{ since } \ell<n\\
& = \frac{\p^2+1}{\p(\p+1)}.
\end{align*}

\textbf{Case $\ell \geq n$:}
Here we can only use $t \geq n \p$, obtaining
\[
\CRab{\DHk}
\geq
\frac{\ell+t \p}{n \p (\p+1)} \geq \frac{n+t \p}{n \p (\p+1)} \geq \frac{n(1+\p^2)}{n\p(\p+1)}=\frac{\p^2+1}{\p(\p+1)}.
\]

For the upper bound,
we consider the sequence $\SEQ{\SEQ{\frac{1}{\p}- \frac{\varepsilon}{\p-1}}^{\p-1},\frac{1}{\p}+\varepsilon}^n$,
where $0 < \varepsilon < \MINinline{(\p-1)(\frac{1}{\p}-a),b-\frac{1}{\p}}$,
ensuring that both item sizes belong to $(a,b)$.
\OPT covers $n$ bins, whereas
\DHk packs the different sized items in separate bins,
and covers $\frac{n(\p-1)}{\p+1}+\frac{n}{\p} = \frac{\p^2+1}{\p(\p+1)}n$ bins,
up to an additive constant independent of $n$ which is due to rounding.
\end{proof}

It follows that if $(a,b)$ contains exactly one \DHk partitioning
point, $\frac{1}{\p}$, and $k \geq \p$, then \DHk has a better competitive ratio than \DNF:

\begin{corollary}
\label{corollary-2int}
 If $\frac{1}{\p+1}\leq a < \frac{1}{\p}$ and $k \geq \p$, then $$\CRab{\DHk} > \CRab{\DNF}\,.$$
\end{corollary}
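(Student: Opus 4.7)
The plan is to simply invoke the two preceding theorems and compare the two closed-form ratios algebraically. From Theorem~\ref{lemma-DNF-2int} we have $\CRab{\DNF} = \frac{\p}{\p+1}$ and from Theorem~\ref{lemma-Har-2int} we have $\CRab{\DHk} = \frac{\p^2+1}{\p(\p+1)}$, both valid under the hypothesis $\frac{1}{\p+1} \leq a < \frac{1}{\p}$ (with the additional assumption $k \geq \p$ required for the \DHk bound).

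The key step is to put both fractions over the common denominator $\p(\p+1)$. Then $\CRab{\DNF} = \frac{\p^2}{\p(\p+1)}$ and $\CRab{\DHk} = \frac{\p^2+1}{\p(\p+1)}$, so the difference $\CRab{\DHk} - \CRab{\DNF} = \frac{1}{\p(\p+1)} > 0$, which yields the strict inequality. Since $\p \geq 2$ (because $\frac{1}{\p} < 1$ forces $\p \geq 2$, and indeed we need $\p \geq 2$ so that the interval $(a,b)$ can sit around a genuine \DHk border, not the border at $1$), the denominator is positive and the argument is complete.

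There is no real obstacle here: the corollary is a one-line consequence of the two theorems, and I would write the proof as a single display comparing the numerators $\p^2 + 1 > \p^2$ over the common positive denominator $\p(\p+1)$.
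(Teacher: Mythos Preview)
Your proposal is correct and matches the paper's own proof essentially line for line: the paper also invokes Theorems~\ref{lemma-DNF-2int} and~\ref{lemma-Har-2int} and observes that $\CRab{\DHk} = \frac{\p^2+1}{\p(\p+1)} = \frac{\p}{\p+1} + \frac{1}{\p(\p+1)} = \CRab{\DNF} + \frac{1}{\p(\p+1)}$. Your common-denominator phrasing is the same computation.
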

\begin{proof}
 The result follows from Theorems~\ref{lemma-DNF-2int} and~\ref{lemma-Har-2int},
since $\CRab{\DHk} = \frac{\p^2+1}{\p(\p+1)} = \frac{\p}{\p+1} + \frac{1}{\p(\p+1)} = \CRab{\DNF} + \frac{1}{\p(\p+1)}$.
\end{proof}

We now consider intervals $(a,b)\subseteq
(0,1)$ that contain exactly two \DHk partitioning points,
 $\frac{1}{\p}$ and $\frac{1}{\p+1}$.
Including the extra partitioning point, $\frac{1}{\p+1}$, results in a
 lower competitive ratio for \DNF, with an upper bound depending on whether
 $b$ is smaller or larger than $\frac{\p+2}{\p(\p+1)}$.
The competitive ratio of \DHk becomes lower than with just one
partitioning point, only if $b > \frac{\p+2}{\p(\p+1)}$.


\begin{theorem}\label{lemma-DNF-3int-small}
If $a < \frac{1}{\p + 1} $, then
\[
\CRab{\DNF} \leq
\left\{
\begin{array}{ll}
  \displaystyle
  \frac{\p+1}{\p+2},          & 
  \displaystyle
  \mbox{if $b \leq \frac{\p+2}{\p(\p+1)}$} \\[2ex]
  \displaystyle
  \frac{\p^2+\p}{\p^2+2\p+2}, & 
  \displaystyle
  \mbox{otherwise}
\end{array}
\right.
\]
\end{theorem}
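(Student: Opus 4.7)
The plan is to prove the upper bound by exhibiting, for each of the two cases on $b$, an adversarial family of sequences $\{I_n\}_{n \in \mathbb{N}}$ in $(a,b)$ and showing that $\lim_{n \to \infty} \DNF(I_n)/\OPT(I_n)$ equals (or is at most) the claimed ratio. The proof for Proposition~\ref{proposition-basic-dnf} and Theorem~\ref{lemma-DNF-2int} already indicates the flavour: pick items whose sizes cluster around a \DHk partitioning point, and use the block structure to make DNF close each bin at a wasteful moment.

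For the first case ($b \leq \frac{\p+2}{\p(\p+1)}$), the plan is to build blocks consisting of several ``small'' items of size just above $\frac{1}{\p+1}$ (call them $s = \frac{1}{\p+1} + \delta$) followed by one ``large'' item of size close to $\frac{\p+2}{\p(\p+1)}$ (call it $L$). The number of smalls per block is chosen so that DNF, processing items in order, closes exactly one bin per block, and the closing $L$ drives the bin load to roughly $\frac{\p+2}{\p+1}$, wasting about $\frac{1}{\p+1}$. Meanwhile OPT, by pairing each $L$ with fewer smalls than DNF does (using the fact that at the extreme $L + \frac{1}{\p+1} \geq 1$, so a single $s$ may suffice together with one $L$), frees enough small items to form extra bins of $\p+1$ smalls each; the extras push $\OPT(I_n)$ up by a factor of $\frac{\p+2}{\p+1}$ over $\DNF(I_n)$.

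For the second case ($b > \frac{\p+2}{\p(\p+1)}$), the plan is analogous but exploits the extra slack in $b$: the adversary can use an $L$ strictly larger than $\frac{\p+2}{\p(\p+1)}$, which allows OPT more efficient bin configurations and hence a strictly larger gap with DNF, giving the sharper bound $\frac{\p^2+\p}{\p^2+2\p+2}$. In both cases the calculation proceeds by (i) tracing DNF's in-order packing on $I_n$ to get a closed-form bin count up to additive constants, (ii) exhibiting an explicit OPT packing (and verifying optimality by LP over feasible bin configurations on the multiset of items used), and (iii) taking the ratio.

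The main obstacle is the calibration: the block length, the ratio of $s$'s to $L$'s, and the perturbations $\delta$ and $\eta$ (around $\frac{1}{\p+1}$ and the upper bound on $L$) must be chosen so that two opposing conditions hold simultaneously: DNF's sequential packing closes each bin at precisely the wasteful moment (so $\DNF(I_n)$ is small), while OPT's intended efficient bin configurations remain valid (each summing to at least $1$). Getting a sequence that meets the exact target ratio---rather than merely some weaker ratio---is the delicate part, since small miscalibrations yield suboptimal bounds.
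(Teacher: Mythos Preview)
Your plan has a concrete error in the first case and is too vague to succeed in the second.

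For the case $b \leq \frac{\p+2}{\p(\p+1)}$, the paper does something much simpler than what you propose: it just applies Proposition~\ref{proposition-basic-dnf} with $x = \p+1$, using items of size $\frac{1}{\p+1} \pm \eps$ (both of which lie in $(a,b)$ since $a < \frac{1}{\p+1} < \frac{1}{\p} < b$). That yields the bound $\frac{\p+1}{\p+2}$ immediately. Your proposed construction, with $s \approx \frac{1}{\p+1}$ and $L \approx \frac{\p+2}{\p(\p+1)}$, rests on the claim that ``at the extreme $L + \frac{1}{\p+1} \geq 1$''. But $\frac{\p+2}{\p(\p+1)} + \frac{1}{\p+1} = \frac{2}{\p}$, which is $<1$ for every $\p \geq 3$. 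So \OPT cannot cover a bin with one $L$ and one $s$; it needs at least $\p - 1$ copies of $s$ alongside $L$, and your ``freed smalls'' accounting collapses. The construction you sketch works only for $\p = 2$.

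For the case $b > \frac{\p+2}{\p(\p+1)}$, saying the construction is ``analogous but exploits the extra slack in $b$'' is not a proof. The paper's actual sequence is far from a simple block repetition: it interleaves five distinct subsequence types mixing items near $\frac{1}{\p}$, near $\frac{1}{\p+1}$, and near $\frac{\p+2}{\p(\p+1)}$, including a telescoping family indexed by $i = 1,\dots,(\p+1)n-1$ with perturbations growing linearly in $i$. The reason is exactly the obstacle you name at the end: to hit the target ratio $\frac{\p(\p+1)}{\p^2+2\p+2}$, \DNF must be forced into bins of three different compositions while \OPT recombines the same multiset into four different bin types. A single ``$s$'s plus one $L$'' block cannot achieve this; you need to exhibit the full construction and verify both packings arithmetically.
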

\begin{proof}
Applying Proposition~\ref{proposition-basic-dnf}, using $x=\p+1$,
we get an upper bound of $\frac{\p+1}{\p+2}$.

If $b > \frac{\p+2}{\p(\p+1)}$, we can strengthen the upper bound further.
Since
\[a < \frac{1}{\p+1} < \frac{1}{\p} < \frac{\p+2}{\p(\p+1)} < b < \frac{1}{\p-1},\]
we can choose an $\eps>0$ small enough so that for any fraction $r$ in this
sequence of inequalities, and any constant, $c$, used below,
$r-c\eps$ and $r+c\eps$ respect the same inequalities as $r$.
Now, we consider a sequence $I$ consisting of the following subsequences,
for some integer $n$:
 \begin{itemize}
  \item $\displaystyle \SEQ{ \SEQ{\frac{1}{\p}}^{\p-1},\frac{1}{\p}-2\varepsilon,\frac{\p+2}{\p(\p+1)}+\varepsilon }^{(\p+1)(\p-2)n}$ 
  \item $\displaystyle \SEQ{ \SEQ{\frac{1}{\p+1}}^{\p},\frac{1}{\p+1}-\varepsilon,\frac{\p+2}{\p(\p+1)}+\varepsilon}^{(p+1)n}$
  \item $\displaystyle \SEQ{ \frac{1}{\p+1} + i(\p-2)\varepsilon, \frac{1}{\p+1} - (i+1)(\p-2)\varepsilon, \SEQ{\frac{1}{\p+1}+\varepsilon}^{\p-2},
 \frac{1}{\p+1}-\varepsilon,\frac{\p+2}{\p(\p+1)}+\varepsilon}$
 for $i=1,2,\dots,(\p+1)n-1$
  \item $\displaystyle \SEQ{ \frac{1}{\p+1} - (\p-2)\varepsilon, \SEQ{\frac{1}{\p+1}+\varepsilon}^{\p-2},\frac{1}{\p+1}-\varepsilon,\frac{\p+2}{\p(\p+1)}+\varepsilon}$
  \item $\displaystyle \SEQ{ \frac{1}{\p+1}+ (\p+1)n(\p-2)\varepsilon}$
 \end{itemize}
Giving the items in this order, the number of bins covered by \DNF is
$$\DNF(I) = (\p+1)(\p-2)n+(\p+1)n+((\p+1)n-1)+1=\p(\p+1)n\,.$$

\OPT just puts the items into the correct bins
as they arrive, but for verification purposes, we list an order of the
items which is optimal for \DNF, but emphasize that this is \OPT obtaining
this result.
Below, we use that $\frac{1}{\p+1}+\frac{\p+2}{\p(\p+1)}=\frac{2}{\p}$.
The following order illustrates the optimal packing:
\begin{itemize}
 \item $\displaystyle \SEQ{ \frac{\p+2}{\p(\p+1)}+\varepsilon, \frac{1}{\p}-2\varepsilon, \SEQ{ \frac{1}{\p}}^{\p-3},\frac{1}{\p+1}+\varepsilon }^{(\p+1)(\p-2)n}$
 \item $\displaystyle \SEQ{ \frac{\p+2}{\p(\p+1)}+\varepsilon, \SEQ{\frac{1}{\p}}^{\p-2},\frac{1}{\p+1}-\eps }^{2(\p+1)n}$
 \item $\displaystyle \SEQ{ \frac{1}{\p+1} + i(\p-2)\varepsilon, \frac{1}{\p+1} - i(\p-2)\varepsilon, \SEQ{\frac{1}{\p+1}}^{\p-1}}$

 for $i=1,2,\dots,(\p+1)n$
 \item $\displaystyle \SEQ{\SEQ{\frac{1}{\p+1}}^{\p+1}}^{n}$
\end{itemize}
The number of bins covered by \OPT is
 $$\OPT(I) = (\p+1)(\p-2)n+2(\p+1)n+(\p+1)n+n=(\p^2+2\p+2)n\,.$$
\end{proof}

\begin{theorem}\label{lemma-Har-3int}
If $\frac{1}{\p+2}\leq a < \frac{1}{\p+1}$ and $k \geq \p+1$, then
\[
\CRab{\DHk} = 
\left\{
\begin{array}{ll}
\displaystyle 
\frac{\p^3+2\p^2+\p+2}{\p(\p+1)(\p+2)} = \frac{\p^2+1}{\p(\p+1)}, &
               \displaystyle \mbox{if $b \leq \frac{\p+2}{\p(\p+1)}$} \\[2ex]
\displaystyle 
\frac{\p^3+2\p^2+2}{\p(\p+1)(\p+2)},    &
               \displaystyle \mbox{otherwise}
\end{array}
\right.
\]
\end{theorem}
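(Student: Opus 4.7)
The plan is to follow the template of the proof of Theorem~\ref{lemma-Har-2int}, but with three item classes instead of two. Since $k \geq \p + 1$, \DHk packs items of size $\geq \frac{1}{\p}$ (class~$1$, $\p$ per bin), items of size in $[\frac{1}{\p+1}, \frac{1}{\p})$ (class~$2$, $\p+1$ per bin), and items of size in $(a, \frac{1}{\p+1})$ (class~$3$, $\p+2$ per bin) into three separate collections of bins. Writing $\ell_1, \ell_2, \ell_3$ for the number of items in each class, I obtain
\[
\DHk(I) \;\geq\; \frac{\ell_1}{\p} + \frac{\ell_2}{\p+1} + \frac{\ell_3}{\p+2} - 3,
\]
and then reinterpret the right-hand side as a sum of ``values'' over items, where a class-$i$ item has value $1/(\p+i-1)$. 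The lower bound then reduces to showing that every bin of \OPT contributes value at least the claimed ratio, up to an overall additive constant.

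The heart of the argument is this per-bin minimization. For an \OPT bin with $x_i$ items of class~$i$, coverage forces $x_1 s_1 + x_2 s_2 + x_3 s_3 \geq 1$ with $s_1 < b$, $s_2 < \tfrac{1}{\p}$ and $s_3 < \tfrac{1}{\p+1}$; the aim is to minimize $V = \tfrac{x_1}{\p} + \tfrac{x_2}{\p+1} + \tfrac{x_3}{\p+2}$ over nonnegative integer triples consistent with this constraint. In case~$1$, the configuration $(1,\p-1,0)$, exactly as in Theorem~\ref{lemma-Har-2int}, attains $V = \tfrac{\p^2+1}{\p(\p+1)}$; the crucial observation is that the triple $(1,\p-2,1)$ is infeasible precisely because its maximum sum $b + \tfrac{\p-2}{\p} + \tfrac{1}{\p+1}$ is less than $1$ exactly when $b < \tfrac{\p+2}{\p(\p+1)}$. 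In case~$2$, this triple becomes feasible and realises $V = \tfrac{1}{\p} + \tfrac{\p-2}{\p+1} + \tfrac{1}{\p+2} = \tfrac{\p^3+2\p^2+2}{\p(\p+1)(\p+2)}$, which turns out to be the new minimum.

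For the upper bound I reuse the worst-case sequence of Theorem~\ref{lemma-Har-2int} in case~$1$, since its items lie in a neighborhood of $\tfrac{1}{\p} \in (a,b)$. For case~$2$ I construct the sequence realising $(1,\p-2,1)$ explicitly: $n$ blocks each consisting of one item slightly above $\tfrac{\p+2}{\p(\p+1)}$, $\p-2$ items slightly below $\tfrac{1}{\p}$, and one item slightly below $\tfrac{1}{\p+1}$. \OPT covers the $n$ intended bins directly, while \DHk packs each class separately and covers $\tfrac{n}{\p} + \tfrac{n(\p-2)}{\p+1} + \tfrac{n}{\p+2} = \tfrac{n(\p^3+2\p^2+2)}{\p(\p+1)(\p+2)}$ bins up to rounding.

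The main obstacle will be the exhaustive case check that no other integer triple $(x_1,x_2,x_3)$ gives a smaller $V$. I need to rule out mixed configurations such as $(2,\p-2,0)$, $(0,\p,1)$, $(0,\p-1,2)$, and $(1,\p-3,2)$, showing each is either infeasible (using $b < \tfrac{1}{\p-1}$, $s_2 < \tfrac{1}{\p}$, $s_3 < \tfrac{1}{\p+1}$) or yields a value of $V$ at least as large as the claimed minimum in the relevant case. The small case $\p = 2$, where the middle class vanishes from the binding configuration and several of these comparisons become equalities, is likely to need a separate sanity check.
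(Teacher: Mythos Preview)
Your plan is sound and in fact very close to the paper's own argument: the value decomposition $\DHk(I)\ge \sum_i \ell_i/(\p+i-1)-3$, the per-\OPT-bin lower bound, and the two upper-bound sequences (reusing Theorem~\ref{lemma-Har-2int} for $b\le\frac{\p+2}{\p(\p+1)}$; using one large, $\p-2$ medium and one small item per block otherwise) are exactly what the paper does.

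The genuine difference is in how the ``main obstacle'' is handled. You propose an exhaustive enumeration of minimal feasible triples $(x_1,x_2,x_3)$; the paper avoids this entirely by first partitioning \OPT's bins according to their total item count ($\p$, $\p+1$, or $\p+2$) and then proving, for each group, a single linear inequality that bounds $V$ from below. For bins with $\p$ items the key fact is that $x_3\le x_1$: a small item falls short of the average $\tfrac1\p$ by more than $\tfrac{1}{\p(\p+1)}$, while a large item (being $<\tfrac{1}{\p-1}$) exceeds it by at most $\tfrac{1}{\p(\p-1)}$, and a short counting argument then forces at least as many large as small items. Plugging $x_3\le x_1$ and $x_1\ge1$ into $V=\tfrac{x_1}{\p}+\tfrac{\p-x_1-x_3}{\p+1}+\tfrac{x_3}{\p+2}$ gives $V\ge\tfrac{\p^3+2\p^2+2}{\p(\p+1)(\p+2)}$ directly; in the case $b\le\tfrac{\p+2}{\p(\p+1)}$ the large-item surplus is at most $\tfrac{1}{\p(\p+1)}$, which strengthens the inequality to $x_1\ge x_3+1$ and yields the sharper bound. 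For bins with $\p+1$ items the only fact needed is $x_3\le \p$ (at least one item is not small); bins with $\p+2$ items contribute $V\ge1$ trivially. This structural route collapses your case check into three short computations and cleanly handles all $\p\ge2$ at once, so it is worth replacing your enumeration by the $x_3\le x_1$ lemma.
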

\begin{proof}
 We prove the lower bound first.

Items of size less than $\frac{1}{\p+1}$ are called {\em
  small}, items of size 
at least $\frac{1}{\p}$ are called {\em large}, and the remaining items
are called {\em medium}.
Let $s$, $m$, and $\ell$ denote the number of small, medium, and large
 items, respectively.

Consider an optimal packing.
For $i\in\SET{1,2,3}$,
let $n_i$ denote the number of bins with exactly $\p+i-1$ items.
Then,
$n = n_1 + n_2 + n_3$ is the number of bins covered by \OPT.
Since \DHk covers exactly $\lfloor \frac{s}{\p+2} \rfloor + \lfloor
\frac{m}{\p+1} \rfloor + \lfloor \frac{\ell}{\p} \rfloor$ bins,
independent of the order of the items,  we can
consider items from the three types of bins separately.

{\bf Bins with $\p$ items:}
Let $s_1$, $m_1$, and $\ell_1$ denote the number of small, medium, and
 large items, respectively, packed in these $n_1$ bins by \OPT.
Further, let $t_1 = s_1 + m_1 + \ell_1 = \p n_1$ denote the total
 number of items packed here.

In each of these bins, the items have an average size of at least
$\frac{1}{\p}$. This means that $\ell_1 \geq n_1$, since each bin has to contain at least one item of size at least $\frac{1}{\p}$. 

We now prove the inequality $s_1 \leq \ell_1$.
We do this by proving the stronger result that for each bin $\bin$ with
exactly $\p$ items, $s(\bin) \leq \ell(\bin)$, where $s(\bin)$ and $\ell(\bin)$
denote the number of small and large items in $\bin$, respectively.

Small items deviate from the average size with strictly more than
\begin{align*}
\varepsilon_s
  & = \frac{1}{\p} - \frac{1}{\p+1}
    = \frac{1}{\p(\p+1)}
    = \frac{\p-1}{(\p-1)\p(\p+1)}\,,
\end{align*}
and large items deviate with at most
\begin{align*}
\varepsilon_{\ell}
& = \frac{1}{\p-1} - \frac{1}{\p}
   = \frac{1}{(\p-1)\p}
   = \frac{\p+1}{(\p-1)\p(\p+1)}\,.
\end{align*}
Thus, having an average item size of at least $\frac{1}{\p}$ within a
  bin $\bin$ requires $\ell(\bin) \eps_{\ell} > s(\bin) \eps_s$.
Assume that $\bin$ contains more small items than large items, i.e.,
  $s(\bin) \geq \ell(\bin)+1$.
Then, $\ell(\bin) \eps_{\ell} > (\ell(\bin)+1) \eps_s$,
which is equivalent to
$\ell(\bin) > \frac{\eps_s}{\eps_{\ell}-\eps_s}$,
implying that $\ell(\bin) >   \frac{\p-1}{2}$,
using the equation above.
Since $\ell(\bin)$ is an integer, this means that $\ell(\bin) \geq
  \frac{\p}{2}$, and
since $\bin$ contains exactly $p$ items, this proves that $\ell(\bin) \geq
  s(\bin)$.

The contribution to the number of bins covered by \DHk from the $t_1$
items considered here is more than $d_1 -3$, where the $-3$ comes from
a possible fractional part in the three addends below.
\begin{align*}
d_1
 & = \frac{s_1}{\p+2} + \frac{m_1}{\p+1} + \frac{\ell_1}{\p}\\
 & = \frac{s_1}{\p+2} + \frac{\p n_1 - s_1 - \ell_1}{\p+1} + \frac{\ell_1}{\p}\\
 & = \frac{(\p+1)s_1}{(\p+1)(p+2)} + \frac{\p n_1}{\p+1} -
     \frac{(\p+2)s_1}{(\p+1)(\p+2)} - \frac{\p \ell_1}{\p(\p+1)} +
     \frac{(p+1)\ell_1}{\p(\p+1)} \\ 
 & = \frac{\p n_1}{\p+1} - \frac{s_1}{(\p+1)(\p+2)} + \frac{\ell_1}{\p(\p+1)} \\
 & \geq \frac{\p n_1 }{\p+1} - \frac{\p \ell_1}{\p(\p+1)(\p+2)}
      + \frac{(\p+2)\ell_1}{\p(\p+1)(\p+2)}, \text{ since } s_1 \leq \ell_1 \\
 & = \frac{\p^2 (\p+2) n_1 }{\p(\p+1)(\p+2)} +
     \frac{2\ell_1}{\p(\p+1)(\p+2)}\\
 & \geq  \frac{(\p^3+2\p^2)n_1}{\p(\p+1)(\p+2)} +
     \frac{2n_1}{\p(\p+1)(\p+2)}, \text{ since } \ell_1 \geq n_1\\
 & =  \frac{\p^3+2\p^2+2}{\p(\p+1)(\p+2)} n_1
\end{align*}

If $b \leq \frac{\p+2}{\p(\p+1)} = \frac{1}{\p} + \frac{1}{\p(\p+1)}$, 
 one large item is not large enough to compensate for the loss of contribution
to the average that a small item generates (recall that this loss is
strictly larger than $\varepsilon_s = \frac{1}{\p(\p+1)}$). Therefore,
additional to the $n_1$ large 
items, there has to be at least one more large item for each small
item, i.e., $\ell_1 \geq s_1 + n_1$.
In this case, we can strengthen the calculations above from a certain point:
\begin{align*}
 d_1
 & = \frac{\p n_1}{\p+1} - \frac{s_1}{(\p+1)(\p+2)} + \frac{\ell_1}{\p(\p+1)} \\
  & \geq \frac{\p n_1}{\p+1}-\frac{s_1}{(\p+1)(\p+2)}+\frac{s_1 + n_1}{\p(\p+1)}, \text{ since } \ell_1 \geq s_1 + n_1 \\
  & = \frac{(\p^2+1)n_1}{\p(\p+1)} +\frac{2s_1}{\p(\p+1)(\p+2)}\\
  & \geq \frac{\p^2+1}{\p(\p+1)}n_1, \text{ since } s_1 \geq 0 \\
  & = \frac{\p^3+2\p^2+\p+2}{\p(\p+1)(\p+2)}n_1
\end{align*}

{\bf Bins with $\p+1$ items:}
Let $s_2$, $m_2$, and $\ell_2$ denote the number of small, medium, and
 large items, respectively, packed in these $n_2$ bins.
Further, let $t_2 = s_2 + m_2 + \ell_2 = (\p+1) n_2$ denote the total
 number of items packed here.
 
 In each of these bins, the items have an average size of at least
$\frac{1}{\p+1}$.
This means that $s_2 \leq \p n_2 $, as each bin has to contain at least one item of size at least $\frac{1}{\p+1}$. 

The contribution to the number of bins covered by \DHk from the $t_2$
items considered here is more than $d_2 -3$, where
\begin{align*}
  d_2
  & = \frac{s_2}{\p+2} + \frac{m_2}{\p+1} + \frac{\ell_2}{\p}\\
  & = \frac{s_2}{\p+2} + \frac{(\p+1)n_2-s_2-\ell_2}{\p+1} +
      \frac{\ell_2}{\p}\\
  & = \frac{(\p+1)s_2}{(\p+1)(\p+2)} + n_2 -
      \frac{(\p+2)s_2}{(\p+1)(\p+2)} - \frac{\p \ell_2}{\p(\p+1)} + \frac{(\p+1)\ell_2}{\p(\p+1)}\\
  & = n_2 - \frac{s_2}{(\p+1)(\p+2)} + \frac{\ell_2}{\p(\p+1)}\\
  & \geq n_2 - \frac{\p n_2}{(\p+1)(\p+2)} + \frac{\ell_2}{\p(\p+1)}, \text{ since } s_2 \leq \p n_2 \\
  & \geq \frac{(\p^2+2\p+2)n_2}{(\p+1)(\p+2)}, \text{ since } \ell_2 \geq 0 \\
  & \geq \frac{\p^3+2\p^2+2\p}{\p(\p+1)(\p+2)}n_2 \\
  & \geq \frac{\p^3+2\p^2+\p+2}{\p(\p+1)(\p+2)}n_2, \text{ since } \p \geq 2 \\
\end{align*}

{\bf Bins with $\p+2$ items:}
Since \DHk cannot be forced to pack more than $\p+2$ items in each bin,
the contribution to the number of bins covered by \DHk from the
items considered here is exactly $n_3$.

Now, we turn to the upper bound.
Assume first that $b > \frac{\p+2}{\p(\p+1)}$.
Consider the sequence
$\SEQ{\SEQ{\frac{1}{\p+1}-\varepsilon}^n,
      \SEQ{\frac{\p+2}{\p(\p+1)}+(\p-1)\varepsilon}^n,
      \SEQ{\frac{1}{\p}-\varepsilon}^{n(\p-2)}}$
for some $\varepsilon>0$,
sufficiently small such that all the items in the sequence are in
the range $(a,b)$.
Since $\frac{1}{\p+1} + \frac{\p+2}{\p(\p+1)} = \frac{2}{\p}$,
\OPT can cover $n$ bins by combining one
item of size $\frac{1}{\p+1}-\varepsilon$, one item of size 
$\frac{\p+2}{\p(\p+1)}+(\p-1)\varepsilon$, and $(\p-2)$ items of size
$\frac{1}{\p}-\varepsilon$. 
\DHk packs each kind of item
separately, covering $\frac{n}{\p+2}+\frac{n}{\p}+\frac{n(\p-2)}{\p+1} =
\frac{\p^3+2\p^2+2}{\p(\p+1)(\p+2)}n$ bins. 
 
 If $b \leq \frac{\p+2}{\p(\p+1)}$,
 we do not need small items to get this weaker upper bound.
 It is sufficient to consider the two larger intervals and use Theorem~\ref{lemma-Har-2int}, 
 since $\frac{\p^2+1}{\p(\p+1)}=\frac{\p^3+2\p^2+\p+2}{\p(\p+1)(\p+2)}$.
\end{proof}

It follows that if $(a,b)$ contains exactly two \DHk partitioning
points, then \DHk has a better competitive ratio than \DNF:

\begin{corollary}
\label{cor:abcomp}
 If $\frac{1}{\p+2}\leq a < \frac{1}{\p+1}$, then $\CRab{\DHk} > \CRab{\DNF}$.
\end{corollary}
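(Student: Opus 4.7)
The plan is to reduce the statement to the previous two theorems by splitting into the two cases appearing in Theorems~\ref{lemma-DNF-3int-small} and~\ref{lemma-Har-3int}, based on whether $b\leq \frac{\p+2}{\p(\p+1)}$ or $b>\frac{\p+2}{\p(\p+1)}$. In each case, I have an upper bound on $\CRab{\DNF}$ and an exact expression for $\CRab{\DHk}$, so the conclusion reduces to a purely algebraic comparison of two rational functions in~$\p$. (The hypothesis $k\geq \p+1$ needed to invoke Theorem~\ref{lemma-Har-3int} should be stated or noted as implicit.)

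First I would handle the case $b\leq \frac{\p+2}{\p(\p+1)}$. Here Theorem~\ref{lemma-Har-3int} gives $\CRab{\DHk}=\frac{\p^2+1}{\p(\p+1)}$ and Theorem~\ref{lemma-DNF-3int-small} gives $\CRab{\DNF}\leq \frac{\p+1}{\p+2}$. Cross-multiplying, it suffices to show $(\p^2+1)(\p+2) > \p(\p+1)^2$; expanding both sides produces $\p^3+2\p^2+\p+2$ versus $\p^3+2\p^2+\p$, so the difference is exactly $2$, which is strictly positive.

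Next I would handle $b>\frac{\p+2}{\p(\p+1)}$. Here $\CRab{\DHk}=\frac{\p^3+2\p^2+2}{\p(\p+1)(\p+2)}$ and $\CRab{\DNF}\leq \frac{\p^2+\p}{\p^2+2\p+2}=\frac{\p(\p+1)}{\p^2+2\p+2}$, so it suffices to show
\[
(\p^3+2\p^2+2)(\p^2+2\p+2) > \p^2(\p+1)^2(\p+2).
\]
Expanding, the left-hand side equals $\p^5+4\p^4+6\p^3+6\p^2+4\p+4$ and the right-hand side equals $\p^5+4\p^4+5\p^3+2\p^2$, so the difference is $\p^3+4\p^2+4\p+4>0$ for all $\p\geq 1$.

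The main obstacle here is nothing more than bookkeeping in the polynomial expansions; there is no conceptual subtlety beyond making sure the two case distinctions on $b$ line up between the two theorems, and that the hypothesis on $k$ is in force. Combining the two cases yields $\CRab{\DHk} > \CRab{\DNF}$ throughout the range $\frac{1}{\p+2}\leq a < \frac{1}{\p+1}$.
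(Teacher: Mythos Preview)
Your proof is correct and follows essentially the same approach as the paper: the same two-case split on $b$ relative to $\frac{\p+2}{\p(\p+1)}$, followed by a direct algebraic comparison using Theorems~\ref{lemma-DNF-3int-small} and~\ref{lemma-Har-3int}. The only cosmetic difference is that the paper rewrites $\CRab{\DHk}$ as the \DNF upper bound plus an explicit positive remainder, whereas you clear denominators and compare polynomials; your expansions are correct (and in fact your difference $\p^3+4\p^2+4\p+4$ in the second case is the right one).
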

\begin{proof}
The result follows from Theorems~\ref{lemma-DNF-3int-small} and~\ref{lemma-Har-3int}, 
since if $b \leq \frac{\p+2}{\p(\p+1)}$, then
\[
\begin{array}{rcl}
\CRab{\DHk} & = & \displaystyle \frac{\p^3+2\p^2+\p+2}{\p(\p+1)(\p+2)} = 
                  \frac{\p+1}{\p+2} + \frac{2}{\p(\p+1)(\p+2)} \\[2ex]
             & > & \displaystyle\frac{\p+1}{\p+2} \geq \CRab{\DNF}
\end{array}
\]
and otherwise,
\[
\begin{array}{rcl}
\CRab{\DHk} & = & \displaystyle \frac{\p^3+2\p^2+2}{\p(\p+1)(\p+2)} \\[2ex]
            & = & \displaystyle 
                  \frac{\p(\p+1)}{\p^2+2\p+2} +
                  \frac{2\p^3+4\p^2+4\p+4}{(\p^2+2\p+2)\p(\p+1)(\p+2)} \\[2ex]
            & > &  \displaystyle\frac{\p(\p+1)}{\p^2+2\p+2} \geq  \CRab{\DNF}
\end{array}
\]
\end{proof}

\section{Relative Worst Order Analysis}
Relative worst order analysis was introduced by Boyar and
Favrholdt~\cite{BF07j} and it
compares the performance of two algorithms $\ALG$ and $\ALGB$ directly
instead of via 
the comparison to \OPT. Algorithms are compared on the same input
sequence $I$, but on the worst possible permutation of $I$
for each algorithm.

Formally,
if $n$ is the length of $I$, and $\sigma$ is a permutation on $n$ elements,
then $\sigma(I)$ denotes $I$ permuted by $\sigma$, and
we define $\ALG_W(I) = \min_\sigma \ALG(\sigma(I))$.
If there exists a fixed constant $b$ such that, for any input sequence
$I$, $\ALG_W(I) \geq \ALGB_W(I) - b$, then $\ALG$ and $\ALGB$ are {\em comparable}
and the {\em relative worst order ratio} of $\ALG$ to $\ALGB$ is defined as follows:
$$\RWOR{\ALG}{\ALGB} = 
 \sup \SETOF{c}{\exists b \, \forall I \colon \ALG_W(I) \geq c \ALGB_W(I) - b}$$
Note that since the performance of \DHk does not depend on the
order in which the items are given, relative worst order analysis of
\DNF versus \DHk gives the same result as simply comparing the two
algorithms on each sequence separately, just as competitive analysis
with \OPT replaced by \DHk.

In~\cite{EFK12}, a relative worst order analysis of \DHk and \DNF is given
 for the model that allows items of size~1,
showing that for $i<j$, $\RWOR{\DHj}{\DHi} = \frac{i+1}{i}$.
Hence, in this model, $\RWOR{\DHk}{\DNF}=2$, for $k \geq 2$, since
 \DNF and \DHone are equivalent.
Note that, for $i \geq 2$, the result from~\cite{EFK12} holds for our
 model too, since the lower bound sequences for these cases do not
 contain items of size 1.

We first show that \DHk and \DNF are comparable.
This is a special case of the corresponding result in~\cite{EFK12}.
\begin{lemma}\label{lemma_Harbetter}
For any $k\geq 1$ and any input sequence $I$,
\[\DHk_W(I) \geq \DNF_W(I)-(k-1)\]
\end{lemma}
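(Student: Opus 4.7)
The plan is to exhibit an ordering $\sigma^*$ of $I$ on which \DNF performs essentially no better than \DHk. Since the performance of \DHk is order-independent, $\DHk_W(I) = \DHk(I)$, so if we can produce an ordering $\sigma^*$ with $\DNF(\sigma^*(I)) \le \DHk(I) + (k-1)$, then
\[
\DNF_W(I) \le \DNF(\sigma^*(I)) \le \DHk(I) + (k-1) = \DHk_W(I) + (k-1),
\]
which is exactly the desired inequality.

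I would take $\sigma^*$ to be the concatenation of $k$ blocks $B_2, B_3, \ldots, B_k, B_s$, where $B_j$ (for $j = 2, \ldots, k$) contains all items in the \DHk-interval $[\frac{1}{j},\frac{1}{j-1})$ and $B_s$ contains all items in $(0, \frac{1}{k})$; the order of items within a block can be chosen arbitrarily. Every bin covered by \DNF on $\sigma^*(I)$ is then either \emph{pure}, with all items coming from one block, or \emph{mixed}, with items from exactly two consecutive blocks (more than two is impossible because \DNF keeps at most one open bin at a time, so a mixed bin is always the first bin opened in its block).

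The central observation is that within a block $B_j$ with $j \ge 2$, any $j$ items sum to at least $1$ while any $j-1$ items sum strictly less than $1$. Hence each pure class-$j$ bin uses exactly $j$ items, so the number of pure class-$j$ bins is at most $\lfloor n_j / j \rfloor$, where $n_j = |B_j|$. For the small block $B_s$, after the transition mixed bin (if any) has been closed, \DNF processes the remaining small items in exactly the same one-open-bin manner that \DHk uses for the small class; since dropping items from the front of a sequence can only decrease \DNF's bin count (monotonicity in the input multiset), the number of pure small bins is at most $c_s$, where $c_s$ denotes the small-bin contribution to $\DHk(I)$. Finally, the total number of mixed bins is bounded by the number of block transitions, which is $k-1$.

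Summing these contributions gives
\[
\DNF(\sigma^*(I)) \;\le\; \sum_{j=2}^{k} \lfloor n_j/j \rfloor + c_s + (k-1) \;=\; \DHk(I) + (k-1),
\]
which finishes the argument. The only delicate point is the treatment of $B_s$: class membership alone does not fix an items-per-bin count in this block, so one must appeal to monotonicity of \DNF's bin count in the small-item multiset to control the pure small bins. The rest of the bookkeeping, namely the one-mixed-bin-per-transition accounting and the exact-$j$-items-per-bin behaviour inside each block $B_j$ for $j \geq 2$, is combinatorially straightforward.
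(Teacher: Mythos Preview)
Your argument is correct and follows the same high-level strategy as the paper: construct an ordering $\sigma^*$ on which \DNF does at most $k-1$ better than \DHk, then conclude via $\DNF_W(I) \leq \DNF(\sigma^*(I))$. The specific constructions differ, though. You group items by \DHk size class and charge the $k-1$ class transitions one mixed bin each; the paper instead lists items \emph{bin by bin} according to \DHk's actual packing, giving all covered bins first and then the contents of the at most $k$ open bins. In the paper's ordering, \DNF reproduces \DHk's covered bins verbatim, and the leftover items have total volume strictly less than $k$, so \DNF covers at most $k-1$ additional bins---a one-line volume argument. Your class-by-class route reaches the same bound but needs the extra step for the small block $B_s$ (that dropping a prefix never increases \DNF's covered-bin count); this is true, though it is really a statement about prefixes of a fixed sequence rather than ``monotonicity in the input multiset'' as you phrase it. The paper's bin-by-bin ordering sidesteps that issue entirely and is a bit more economical; your version is a valid alternative.
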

\begin{proof}
For any sequence $I$, we can construct an input sequence for \DNF by giving the items in the order they are packed in the bins by \DHk; first the covered bins and afterwards the items within the uncovered bins. For the closed bins, \DNF then does the same as \DHk. \DNF can cover at most $k-1$ additional bins, because \DHk has at most $k$ open bins at the end.
Thus, for any $I$, if $\sigma_{\DHk}(I)$ and $\sigma_{\DNF}(I)$ denote
 the worst permutations of $I$ with respect to the two algorithms, then
 $\DHk_W(I) = \DHk(\sigma_{\DHk}(I)) \geq \DNF(\sigma_{\DHk}(I))-(k-1) \geq
 \DNF(\sigma_{\DNF}(I))-(k-1) = \DNF_W(I)-(k-1)$. 
\end{proof}

Thus, according to relative worst order analysis, \DHk is at least as
good as \DNF.
The next lemma establishes
 a separation between the two algorithms in our model.
\begin{lemma}\label{lemma_32Harmonic}
For any $k\geq 2$, $\RWOR{\DHk}{\DNF} \geq \frac32$.
\end{lemma}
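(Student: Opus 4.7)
The plan is to exhibit a family of sequences on which $\DHk$ covers close to $\frac{3}{2}$ times as many bins as $\DNF$ does on its worst permutation. The strategy is to pair very small items, which $\DHk$ stuffs into nearly-full bins of volume close to $1$, with items close in size to $1$, which $\DHk$ packs two per bin; the ordering then forces $\DNF$ to overspend one nearly-full item per bin.

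Concretely, for any integer $m \geq k$, I would set $\eps = \frac{1}{m+1}$ and, for each positive integer $n$, consider the sequence
\[
I_n \;=\; \BIGSEQ{\BIGSEQ{\eps}^{m},\; 1-\eps}^{n}
\]
consisting of $mn$ items of size $\eps$ and $n$ items of size $1-\eps$. Since $m \geq k$ gives $\eps < \frac{1}{k}$, the small items lie in the stuffed interval $(0,\frac{1}{k})$ of $\DHk$, while the items of size $1-\eps \geq \frac{1}{2}$ lie in $[\frac{1}{2},1)$ and are packed two per bin by $\DHk$.

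First I would compute $\DHk_W(I_n)$. Since $\DHk$ is order-independent, $\DHk_W(I_n) = \DHk(I_n)$. A stuffed bin closes as soon as the accumulated volume of size-$\eps$ items reaches $1$, which takes exactly $m+1$ items (giving volume exactly $1$); so the $mn$ small items contribute $\FLOOR{mn/(m+1)}$ bins, and the $n$ large items contribute a further $\FLOOR{n/2}$. Thus $\DHk_W(I_n) \geq \frac{mn}{m+1} + \frac{n}{2} - 2$.

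Next I would bound $\DNF_W(I_n)$ from above by exhibiting one bad permutation, namely the natural order above. In each round $\BIGSEQ{\eps,\ldots,\eps,1-\eps}$, the $m$ small items bring the current bin to volume $\frac{m}{m+1} < 1$, and the following large item raises it to $\frac{2m}{m+1} \geq 1$, closing the bin. Hence $\DNF$ closes exactly one bin per round, giving $\DNF(I_n) = n$ and therefore $\DNF_W(I_n) \leq n$. Combining,
\[
\frac{\DHk_W(I_n)}{\DNF_W(I_n)} \;\geq\; \frac{m}{m+1} + \frac{1}{2} - \frac{2}{n},
\]
which tends to $\frac{m}{m+1} + \frac{1}{2}$ as $n \to \infty$ and to $\frac{3}{2}$ as $m$ grows. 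Since this construction can be carried out for every $m \geq k$, it yields $\RWOR{\DHk}{\DNF} \geq \frac{3}{2}$.

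The main obstacle is choosing item sizes that actually produce a gap approaching $\frac{3}{2}$. A simpler attempt using items of size $\frac{1}{2}-\delta$ paired with items of size $1-\delta$ only yields a ratio of $\frac{7}{6}$, because $\DHk$'s stuffed bins can then have volume up to nearly $\frac{3}{2}$. The key insight is to take the small items as tiny as possible, so that $\DHk$'s stuffed bins have volume essentially $1$, while $\DNF$ is tricked into bins of volume close to $2$ by the adversarial interleaving.
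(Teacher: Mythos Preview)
Your construction is correct and reaches the claimed bound, but it differs from the paper's in one notable way and carries a small technical wrinkle.

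The paper uses the single-parameter family
$I_n = \BIGSEQ{\frac{1}{2},\, \BIGSEQ{\frac{1}{2n}}^{n-1},\, \frac{1}{2}}^{2n}$,
for which $\DHk$ covers $3n-1$ bins and $\DNF$ (on this ordering) covers $2n$ bins, giving $\DHk_W(I_n)\geq \frac{3}{2}\DNF_W(I_n)-1$ with a \emph{fixed} additive constant. The trick is to take the ``large'' items of size exactly $\frac{1}{2}$: two of them already cover a bin for $\DHk$, while in $\DNF$ the interleaved tiny items push the running total to just below~$1$, forcing the second $\frac{1}{2}$-item to waste almost its full volume. Because the tiny items have size $\frac{1}{2n}$, the waste-per-bin approaches $\frac{1}{2}$ automatically as $n$ grows, so no second parameter is needed.

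Your family, by contrast, fixes $m$ and then lets $n\to\infty$, obtaining only $\frac{m}{m+1}+\frac{1}{2}<\frac{3}{2}$ for each $m$, and then takes $m\to\infty$. This \emph{does} suffice to show the relative worst order ratio is at least $\frac{3}{2}$ (since for every $c<\frac{3}{2}$ you can pick $m$ large enough and then exhibit sequences violating the bound for that $c$), but it does not produce a single family satisfying $\DHk_W(I_n)\geq \frac{3}{2}\DNF_W(I_n)-b$ with one additive constant $b$. If you want to match the paper's formulation precisely, simply diagonalise: take $m=n$, i.e.\ $\eps=\frac{1}{n+1}$ and $I_n=\BIGSEQ{\BIGSEQ{\frac{1}{n+1}}^{n},\,\frac{n}{n+1}}^{n}$. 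Then $\DHk_W(I_n)=(n-1)+\FLOOR{n/2}\geq \frac{3}{2}n-2\geq \frac{3}{2}\DNF_W(I_n)-2$, and you get the bound with a fixed constant in one shot. Your remark about the failed $\frac{1}{2}-\delta$ attempt and the need for very small stuffing items is exactly the right intuition; the paper's choice of $\frac{1}{2}$-sized large items is just a particularly economical way to realise it.
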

\begin{proof}
It follows from Lemma~\ref{lemma_Harbetter} that the algorithms are
comparable.

We prove that the ratio cannot be smaller than $\frac{3}{2}$ by
exhibiting a family of sequences $\SET{I_n}$ such that the following two conditions hold:
\begin{itemize}
\item
$\lim_{n\rightarrow\infty}\DHk(I_n)=\infty.$
\item
For all $I_n$, $\DHk{}_W(I_n)\geq\frac{3}{2}\cdot \DNF_W(I_n) - 1$.
\end{itemize}

For each $n \geq 1$, we define $I_n =
 \SEQ{\frac{1}{2},\SEQ{\frac{1}{2n}}^{n-1},\frac{1}{2}}^{2n}$.
\DHk covers $2n + (n-1) = 3n-1$ bins, whereas \DNF covers only $2n$
 bins.
Thus, for all $I_n$,
$${\DHk}_W(I_n)\geq\frac{3}{2}\cdot {\DNF}_W(I_n)-1\,.$$
\end{proof}

By providing a matching upper bound, we determine the exact
relative worst order ratio of the two algorithms.
\begin{theorem}
\label{theorem-RWOR}
$\RWOR{\DHk}{\DNF} = \frac{3}{2}$.
\end{theorem}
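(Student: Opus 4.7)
The lower bound $\RWOR{\DHk}{\DNF}\geq\frac{3}{2}$ has been established in Lemma~\ref{lemma_32Harmonic}, so the remaining task is the matching upper bound. Because the bins covered by \DHk do not depend on the order in which the items are presented, it suffices to prove that for every input sequence $I$ and every permutation $\sigma$,
\[
\DNF(\sigma(I)) \;\geq\; \tfrac{2}{3}\,\DHk(I)-O(1).
\]
I will fix an arbitrary such $\sigma$ and carry out a direct per-ordering comparison after splitting the items of $I$ into \emph{big} ones (size at least $\frac{1}{2}$, which are precisely the items that \DHk groups two to a bin in its type-$2$ bins) and \emph{non-big} ones.

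The first step is structural. Since two big items together have size at least $1$, any \DNF bin holds at most two big items, and if it holds two then the second of them must be the item that closes the bin. Partition the $\DNF(\sigma(I))$ covered bins into four classes $A$, $B_1$, $B_2$, $C$, with sizes $n_A$, $n_{B_1}$, $n_{B_2}$, $n_C$, according to whether the bin contains zero, one (as the last item), one (not as the last item), or two big items. By tracking the fill level at the moment each big item is placed, one gets an upper bound on the total \emph{non-big} volume inside a single bin of each class, namely strictly less than $\frac{3}{2}$, $1$, $1$, and $\frac{1}{2}$, respectively. The $C$-bin bound is the decisive one: all the non-big items placed in a $C$-bin precede the closing (second) big item, so together with the first big item they have total size strictly less than $1$, which forces the non-big part to be strictly less than $1-a_1\leq\frac{1}{2}$, where $a_1\geq\frac{1}{2}$ is the first big item.

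Summing these per-bin bounds, and accounting for the at most one open bin (which has volume strictly less than $1$ and hence contains at most one big item), yields
\[
V_{\neq 2} \;<\; \tfrac{3}{2}n_A + n_{B_1} + n_{B_2} + \tfrac{1}{2}n_C + 1
\qquad\text{and}\qquad
N_2 \;\leq\; n_{B_1} + n_{B_2} + 2n_C + 1,
\]
where $V_{\neq 2}$ is the total non-big volume of $I$ and $N_2$ its number of big items. Every non-big \DHk-bin has volume at least $1$, so $\DHk(I)\leq\lfloor N_2/2\rfloor + V_{\neq 2}$. Plugging the two displayed bounds into this inequality and collecting terms collapses to $\DHk(I) < \tfrac{3}{2}\DNF(\sigma(I))+\tfrac{3}{2}$, which is the required upper bound.

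The main obstacle is that the naive inequality ``every covered \DNF bin has volume strictly less than $2$'' only delivers a ratio of $2$, not $\frac{3}{2}$. The crucial observation that closes the gap is a compensation phenomenon: whenever a \DNF bin has volume close to $2$, it must contain one or two big items that together already occupy most of that volume, leaving very little room for non-big items. The four-class split is designed to expose this trade-off and turn it into a linear bookkeeping that sums cleanly across the covered bins.
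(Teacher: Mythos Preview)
Your argument is correct and cleanly organised; the four-class decomposition by big-item count, together with the bound $\DHk(I)\leq \lfloor N_2/2\rfloor + V_{\neq 2}$, does collapse to $\tfrac{3}{2}\DNF(\sigma(I))+\tfrac{3}{2}$ exactly as you say.

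The paper takes a genuinely different route. Instead of classifying \DNF's bins, it fixes the total volume $n$, sets $x=\DNF(I)/n$, and does a three-case analysis on $x$. The nontrivial case is $\frac{1}{2}\leq x<\frac{2}{3}$: there the total ``overfill'' in \DNF's $xn$ bins is $(1-x)n$, and since the overfill of a bin is at most the size of its closing item, the closing items together have size at least $(1-x)n$. Subtracting $\frac{1}{2}$ from each and observing that $(s-\tfrac{1}{2})^+ \geq s-\tfrac{1}{2}$, one sees that the big items among them already force \DHk to waste at least $(1-x)n - \tfrac{1}{2}xn = n-\tfrac{3}{2}xn$, whence $\DHk(I)\leq \tfrac{3}{2}xn$. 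Both arguments ultimately exploit the same phenomenon---that volume near $2$ in a \DNF bin must come from big items, which \DHk also pairs inefficiently---but your per-bin bookkeeping avoids the case split and the averaging step, at the price of tracking four bin types; the paper's version is shorter but leaves a bit more for the reader to unpack about why the ``average last item'' bound transfers to \DHk's waste.
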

\begin{proof}
Lemma~\ref{lemma_32Harmonic} shows that
$\RWOR{\DHk}{\DNF} \geq \frac{3}{2}$.
Thus, it remains to be established that 
$\RWOR{\DHk}{\DNF} \leq \frac{3}{2}$.

Assume that an input sequence $I$ has a total volume of $n$,
and assume that \DNF covers $xn$ bins.

\textbf{Case $x < \frac{1}{2}$:} To cover fewer than $\frac{n}{2}$ bins, a
volume of more than $\frac{n}{2}$ has to be wasted by overpacking fewer than
$\frac{n}{2}$ bins. Thus, some item of size larger than one must exist,
which is a contradiction.

\textbf{Case $\frac{1}{2}\leq x <\frac{2}{3}$:} If \DNF covers only $xn$
bins, it wastes a volume of $(1-x)n$ by overpacking at most $xn$ bins.
Therefore, the average size of an item that is packed as the last
item in a bin by \DNF
is at least $\frac{(1-x)n}{xn} > \frac{1}{2}$.
Since items larger than $\frac{1}{2}$ are packed with another item
of size at least $\frac{1}{2}$ by \DHk,
the volume above $\frac{1}{2}$ is also wasted for \DHk.
Thus, \DHk wastes at least a volume of
$(\frac{(1-x)n}{xn}-\frac{1}{2})xn = n - \frac{3}{2}xn$.
So,
$\DHk(I) \leq n - (n - \frac{3}{2}xn) = \frac{3}{2} xn = \frac{3}{2} \DNF(I)$.

\textbf{Case $\frac{2}{3}\leq x \leq 1$:} The performance of \DHk is
bounded by the volume $n$ of the sequence $I$, so $\DHk(I) \leq n$.
Thus, $\DHk(I) \leq n = \frac{3}{2} \cdot \frac{2}{3}n \leq \frac{3}{2}
xn = \frac{3}{2} \DNF(I)$.
\end{proof}

We conclude that according to relative worst
order analysis, \DHk is a better algorithm than \DNF.

\section{Random Order Analysis}
The random order ratio was introduced by Kenyon~\cite{K96} as the worst ratio obtained over all sequences $I$, comparing the expected value of an algorithm~\ALG, with respect to a uniform distribution of all permutations, $\sigma$, of $I$, to the value of \OPT on $I$:
\[
 \RO{\ALG} = \liminf\limits_{\OPT(I)\rightarrow \infty} \frac{E_{\sigma}[\ALG(\sigma(I))]}{\OPT(I)}
\]
Note that \OPT is still assumed to know the entire sequence in advance,
so there is no expectation involved in computing $\OPT(I)$.

The following theorem gives a bound on how well \DNF can perform with
respect to the random order ratio.
\begin{theorem}
The random order ratio of \DNF is at most $\frac{4}{5}$.
\end{theorem}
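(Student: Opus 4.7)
The plan is to prove the upper bound by exhibiting, for each integer $m \geq 3$, a family $\{I_{m,n}\}_n$ with $\OPT(I_{m,n}) \to \infty$ on which the expected ratio $E_\sigma[\DNF(\sigma(I_{m,n}))]/\OPT(I_{m,n})$ can be driven arbitrarily close to $4/5$. Let $I_{m,n}$ be the multiset consisting of $n$ items of size $a = 1 - 1/m$ (``large'') and $n$ items of size $b = 1/m$ (``small''). Since $a + b = 1$ exactly, an optimal offline algorithm pairs each large item with a small one, so $\OPT(I_{m,n}) = n$.

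To analyze $E_\sigma[\DNF(\sigma(I_{m,n}))]$ I would classify the bins produced by \DNF. Because $a > 1/2 > b$, a bin whose first item is large is closed after exactly one more item, and a bin whose first item is small keeps receiving items until either a large item arrives or $m$ smalls have accumulated. Thus the only possible bin shapes are $(a,a)$, $(a,b)$, $(b^k,a)$ for $1 \leq k \leq m-1$, and $(b^m)$. Under the simplifying assumption that the fraction of remaining large items stays at $1/2$ throughout the execution, these occur with probabilities $1/4$, $1/4$, $1/2^{k+1}$, and $1/2^m$ respectively, with corresponding wastes $1 - 2/m$, $0$, $(k-1)/m$, and $0$. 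Summing yields expected waste per bin equal to $\tfrac{1}{4}(1 - 2/m) + \tfrac{1}{m}\sum_{k=1}^{m-1}(k-1)/2^{k+1}$, which tends to $1/4$ as $m \to \infty$ because the tail sum converges, so the expected volume per bin tends to $5/4$. Since the total volume of $I_{m,n}$ is exactly $n$ and the open bin at the end holds volume at most $1$, the expected number of covered bins equals $(4/5)n - o(n)$ as $m,n \to \infty$, and the ratio tends to $4/5$, as required.

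The main obstacle is to justify the stationary ``$1$:$1$'' assumption, since the items are sampled without replacement from a finite multiset rather than i.i.d. The key observation is that in the $m\to\infty$ limit each of the four bin shapes consumes $5/4$ large items and $5/4$ small items in expectation, so the expected residual ratio does not drift; a concentration argument or an induction over the bin index bounds the actual deviation of the residual ratio from $1/2$ by $o(1)$ throughout the run, contributing only lower-order corrections to the expected bin count that do not affect the $\liminf$ defining $\RO{\DNF}$. A cleaner alternative is to first analyze the i.i.d.\ model in which each item is drawn independently uniformly from $\{a,b\}$, apply a standard renewal argument there, and then transfer to the random-permutation model via the exchangeability of coordinates conditional on each size appearing exactly $n$ times.
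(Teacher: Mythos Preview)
Your renewal computation under the i.i.d.\ assumption is correct and the limiting ratio $4/5$ is right; the gap is exactly the step you flag, namely passing from i.i.d.\ sampling to a random permutation of a fixed multiset. Neither of your sketched remedies is actually carried out, and the ``no drift plus concentration'' route, while feasible, needs genuine work (a martingale bound on the running imbalance, uniform over the entire run). The paper closes this gap with a one-line trick you are missing. Write $S^n$ for all length-$n$ sequences over $\{\varepsilon,1-\varepsilon\}$ and partition $S^n=\bigcup_i S^n_i$ by the number $i$ of small items. Each $S^n_i$ is a single permutation orbit, so the uniform average of $\DNF$ over $S^n_i$ equals $E_\sigma[\DNF(\sigma(I))]$ for any representative $I$, while $\OPT$ is constant on $S^n_i$. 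The elementary inequality $(\sum a_i)/(\sum b_i)\ge\min_i a_i/b_i$ then gives
\[
\frac{E_{I\in S^n}[\DNF(I)]}{E_{I\in S^n}[\OPT(I)]}\;\ge\;\min_{I}\frac{E_\sigma[\DNF(\sigma(I))]}{\OPT(I)},
\]
and letting $n\to\infty$ bounds $\RO{\DNF}$ from above by the i.i.d.\ ratio. No concentration or coupling is needed; your ``cleaner alternative'' is in fact this argument, but the key inequality that makes the transfer free is never stated in your proposal.

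A second simplification you miss: rather than a double limit $m,n\to\infty$, the paper lets the small size depend on $n$, taking $\varepsilon<1/n$. Then no collection of small items can ever cover a bin, your shape $(b^m)$ disappears, and the process collapses to a three-state Markov chain (no open bin; one large item open; some small items open) with stationary probability $2/5$ on ``no open bin'', giving $E[\DNF]\sim\tfrac{2}{5}n$ directly. Together with $E[\OPT]=\tfrac{n}{2}-O(\sqrt{n})$ from a simple random-walk estimate on the excess of small over large items, this yields $4/5$ without your shape enumeration or the extra $m$-limit.
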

\begin{proof}
Let $S^n$ denote all sequences of length~$n$ with item sizes from ${\cal I}$,
where ${\cal I}=\SET{\varepsilon, 1-\varepsilon}$ for an $\varepsilon$
such that
$0 < \varepsilon < \frac{1}{n}$.
Define
\[S_i^n=\SETOF{I\in S^n}{\mbox{$I$ contains $i$ items of size $\varepsilon$ and $n-i$ items of size $1-\varepsilon$}}\,.\]
Then we can consider the following disjoint partitioning 
$S^n=\bigcup_{0\leq i\leq n}S_i^n$.
We let $R^n$ denote the set of all sequences of length $n$.

The first inequality below follows from two facts:
\begin{itemize}
\item For any pair of sequences, $I, I' \in S^n_i$, $\OPT(I) = \OPT(I')$.
\item For two sums $A = \sum_{i=1}^n a_i$ and $B = \sum_{i=1}^n b_i$,
  $\frac{A}{B} \geq \min_{1 \leq i \leq n} \frac{a_i}{b_i}$.
\end{itemize}
\begin{align*}
        \frac{\EXPDIST{I \in S^n}{\DNF(I)}}{\EXPDIST{I \in S^n}{\OPT(I)}}
 & \geq \min_{0 \leq i \leq n} 
        \frac{\EXPDIST{I \in S^n_i}{\DNF(I)}}{\OPT(I^n_i)}, 
        \text{ where } I^n_i \in S^n_i\\
 & =    \min_{I \in S^n}
        \frac{\EXPDIST{\sigma}{\DNF(\sigma(I))}}{\OPT(I)} 
 \geq \min_{I \in R^n} 
        \frac{\EXPDIST{\sigma}{\DNF(\sigma(I))}}{\OPT(I)}
\end{align*}
Hence,
\[\lim_{n\rightarrow\infty}\frac{\EXPDIST{I\in S^n}{\DNF(I)}}{\EXPDIST{I\in S^n}{\OPT(I)}}
  \geq
  \liminf_{\OPT(I)\rightarrow\infty}\frac{\EXPDIST{\sigma}{\DNF(\sigma(I))}}{\OPT(I)}
  =
  \RO{\DNF}.\]
In the rest of the proof, we bound the leftmost expression from the above,
which then gives us an upper bound on the random order ratio of \DNF.

There is no difference between choosing some element from $S^n$ uniformly
at random and generating a length~$n$ sequence iteratively by choosing
the next item from ${\cal I}$ with equal probability.
Thus, we can analyze the behavior of \DNF by considering a Markov chain, where the state of the system after $i$ items have been processed is determined by the state of the open bin.
The Markov chain is finite and has just three states: either there is no open bin (N~--~for~``No''), one open bin containing one large item of size $1-\varepsilon$ (L~--~for~``Large''), or one bin with a number of small items, each of size $\varepsilon$ (S~--~for~``Small'').
Note that since $\varepsilon<\frac{1}{n}$, there is room for all the small items in one bin, if necessary.

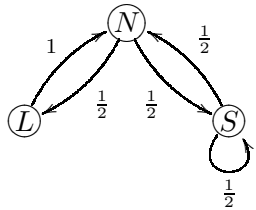
\begin{figure}[htp]
 \begin{displaymath}
 \xymatrix{ & *+[o][F-]{N} \ar@/^/[dl]^{\frac{1}{2}} \ar@/_/[dr]_{\frac{1}{2}}& \\ *+[o][F-]{L} \ar@/^/[ur]^{1} & & *+[o][F-]{S} \ar@/_/[ul]_{\frac{1}{2}} \ar@(dl,dr)_{\frac{1}{2}}}
\end{displaymath}
\caption{A Markov chain describing \DNF's behavior on the considered sequences.}
\end{figure}

This is an irreducible chain, where all states are positive recurrent,
which implies that 
it has a stationary (equilibrium) distribution, and
the probability of ending up in each of the states converges
independently of the starting state~\cite{D91b}.
The probability of being in one of the states $N$, $L$, or $S$ can be calculated from the following equations:
\begin{align*}
  1 & = \PROB{N} + \PROB{L} + \PROB{S} \\
 \PROB{N} &= \PROB{L} + \frac{\PROB{S}}{2} \\
 \PROB{L} &= \frac{\PROB{N}}{2} \\
 \PROB{S} &= \frac{\PROB{N}}{2} + \frac{\PROB{S}}{2}
\end{align*}
This system has the solution $\PROB{N}=\PROB{S}=\frac{2}{5}$ and
$\PROB{L}=\frac{1}{5}$.
From this it follows that $\EXPDIST{I\in S^n}{\DNF(I)}$ tends to $\PROB{N}n = \frac{2}{5}n$.

For the optimal algorithm, note that its result only depends on the
number of items of each size. In particular, after $n$ items, it can
cover $\FLOOR{\frac{n}{2}}$ bins, unless there
are more small than large items.
All the extra small items would be wasted.

Using random walks, it is easy to see that the expected difference between
the number of large and small items is a low order term compared with $n$,
and therefore does not affect the limit.
A sequence of independent stochastic variables $\SET{X_i}_{i\geq1}$,
where $\PROB{X_i=1}=\PROB{X_i=-1}=\frac{1}{2}$,
is called a {\em simple random walk}~\cite{D91b}.
It is well known that if we define $T_n=\sum_{i=1}^n X_i$,
then $\lim_{n\rightarrow\infty}\frac{\EXP{|T_n|}}{\sqrt{n}}=\sqrt{\frac{2}{\pi}}$~\cite{H94b}.
Hence, $\EXP{|T_n|}\in O(\sqrt{n})$,
and then
$\EXPDIST{I\in S^n}{\OPT(I)} = \frac{n}{2} - O(\sqrt{n})$.

In conclusion, we get
\[
 \lim\limits_{n \rightarrow \infty}
    \frac{\EXPDIST{I\in S^n}{\DNF(I)}}{\EXPDIST{I\in S^n}{\OPT(I)}}
  = \lim\limits_{n \rightarrow \infty}\frac{\frac{2}{5}n}{\frac{n}{2}-O(\sqrt{n})}
  = \frac{4}{5}.
\]
\end{proof}

\begin{theorem}
The random order ratio of \DHk is $\frac{1}{2}$.
\end{theorem}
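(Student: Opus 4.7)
I would split the argument in the standard way into a lower bound and an upper bound, and the lower bound is essentially free.

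For $\RO{\DHk} \geq \frac{1}{2}$, recall that $\DHk$ is $\frac{1}{2}$-competitive, so there is a constant $b$ with $\DHk(J) \geq \frac{1}{2}\OPT(J) + b$ for every input $J$. Since $\OPT$ is offline and hence permutation-invariant, substituting $J = \sigma(I)$ and taking expectation over $\sigma$ yields
\[
\EXPDIST{\sigma}{\DHk(\sigma(I))} \;\geq\; \frac{1}{2}\OPT(I) + b,
\]
and dividing by $\OPT(I)$ and taking $\liminf$ as $\OPT(I)\to\infty$ gives $\RO{\DHk} \geq \frac{1}{2}$.

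For the matching upper bound, I plan to exhibit a family $\SET{I_n}_{n\geq 2}$ with $\OPT(I_n)\to\infty$ for which the ratio tends to $\frac{1}{2}$. A natural candidate is $I_n$ consisting of $n$ items of size $1-\frac{1}{n^2}$ together with $n$ items of size $\frac{1}{n^2}$. Then $\OPT$ covers $n$ bins by pairing each large item with one small item, since each pair has total size exactly $1$. On the other hand, for $n$ large enough that $\frac{1}{n^2} < \frac{1}{k}$, the algorithm $\DHk$ places the large items, which lie in $[\frac{1}{2},1)$, into separate bins of two, covering $\lfloor n/2 \rfloor$ bins, and treats the small items in the smallest interval $(0,\frac{1}{k})$; their total volume is only $\frac{1}{n}<1$, so they cover $0$ bins. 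Hence $\DHk(I_n) = \lfloor n/2 \rfloor$ for every ordering, and
\[
\frac{\EXPDIST{\sigma}{\DHk(\sigma(I_n))}}{\OPT(I_n)} = \frac{\lfloor n/2 \rfloor}{n} \;\longrightarrow\; \frac{1}{2}.
\]

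The one subtlety — and the only point that needs mild care — is that on items from the smallest interval, $\DHk$ behaves like $\DNF$, whose output can in general depend on the order of arrival, so one cannot immediately conclude $\EXPDIST{\sigma}{\DHk(\sigma(I))} = \DHk(I)$ for every $I$. In the construction above this worry never materialises, because the small items have aggregate volume less than $1$ and therefore cover no bins under any permutation; the value $\lfloor n/2 \rfloor$ is attained deterministically. Combining the two bounds yields $\RO{\DHk} = \frac{1}{2}$.
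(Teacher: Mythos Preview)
Your proof is correct and follows essentially the same approach as the paper: the lower bound comes from $\frac{1}{2}$-competitiveness, and the upper bound from a family of sequences with $n$ items just below $1$ together with $n$ tiny items of negligible total volume (the paper uses $\varepsilon < \frac{1}{n}$ where you use $\frac{1}{n^2}$). Your explicit treatment of the order-dependence issue on the $(0,\frac{1}{k})$ class is, if anything, more careful than the paper's blanket assertion that the performance of $\DHk$ is permutation-invariant.
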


\begin{proof}
The performance of \DHk does not depend on the order of the items in the sequence. Given a sequence containing $n$ items of size $1-\varepsilon$ and $n$ items of size $\varepsilon$, where $\varepsilon < \frac{1}{n}$, \DHk will always cover $\frac{n}{2}$ bins, while \OPT will cover $n$ bins.
The lower bound is given by Theorem~\ref{competitive-reasonable},
since the random order ratio of a bin covering algorithm is never
worse than its competitive ratio.
\end{proof}

Thus, according to random order analysis, \DNF is at least as good as \DHk.
Though it seems hard to raise the lower bound on the random order
ratio for \DNF above $\frac{1}{2}$, and thereby separate the two
algorithms, we conjecture that \DNF is in fact strictly better than \DHk
with respect to this measure. We discuss this further in the conclusion.

\section{The Max/Max Ratio}
The max/max ratio was introduced by Ben-David and Borodin~\cite{BB94}
and compares an algorithm's worst-case behavior on any sequence of length $n$
with \OPT's worst-case behavior on any sequence of length $n$.

The max/max ratio was introduced for the minimization problems paging
and $K$-server. Since bin covering is a maximization problem,
we actually need a min/min ratio.
Additionally, since the input items can be arbitrarily small, letting
the sequence length approach infinity does not give interesting results.
Thus, we modify the measure to consider the volume, \vol{I}, of a
sequence $I$, where \vol{I} is the sum of the sizes of all the items in $I$:
\begin{align*}
\MINV{\ALG} = \dfrac{\liminf_{v\rightarrow\infty}\min_{\vol{I} = v} \ALG(I)/v}{\liminf_{v\rightarrow\infty}\min_{\vol{I} = v} \OPT(I)/v}
\end{align*}

It turns out that this measure cannot distinguish between \DNF and \DHk
in the general case:

\begin{theorem}
\label{max-not-separate}
Both \DNF and \DHk have a min/min ratio of $1$.
\end{theorem}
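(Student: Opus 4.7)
The plan is to show that for both algorithms the numerator and the denominator of the min/min ratio each equal exactly $\tfrac12$; the theorem then follows at once, since $\MINV{\DNF}=\MINV{\DHk}=(\tfrac12)/(\tfrac12)=1$.

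For the matching upper bounds on all three liminfs, I will use a single adversarial family: for each large $v$, take $n$ items of a common size $s\in[\tfrac12,1)$ chosen so that $ns=v$. On this input, \OPT pairs items optimally, as does \DNF (each next item overflows its predecessor's bin) and \DHk (since $s$ lies in the top \DHk interval $[\tfrac12,1)$, which is paired); all three therefore cover $\lfloor n/2\rfloor$ bins, giving ratios of about $1/(2s)$ that approach $\tfrac12$ from above as $s\to 1^-$ and $v\to\infty$. This yields $\liminf_{v\to\infty}\min_{\vol{I}=v}\OPT(I)/v\leq\tfrac12$, and analogously for \DNF and \DHk.

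For the matching lower bounds I will use a single structural fact: in all three settings every covered bin has volume strictly less than $2$, because every item is strictly smaller than $1$. For \OPT this is seen by packing items greedily offline into groups, closing a group the instant its sum reaches $1$; each group then has volume in $[1,2)$ and the leftover has volume $<1$, so $v<2\cdot\OPT(I)+1$. For \DNF, closed bins have volume in $[1,2)$ and the unique open bin has volume $<1$, giving the same estimate. For \DHk the per-interval rule gives closed bins of volume in $[1,j/(j-1))\subseteq[1,2)$ for $j\geq 2$, and for the smallest interval $(0,1/k)$ the closed bins have volume in $[1,1+1/k)\subseteq[1,2)$; with at most $k$ open bins of volume less than $1$ each, this yields $v<2\cdot\DHk(I)+k$. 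In each case, dividing by $v$ and letting $v\to\infty$ gives the matching lower bound of $\tfrac12$.

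The main subtlety I anticipate is the treatment of the smallest \DHk interval $(0,1/k)$, where the algorithm does not use a fixed number of items per bin, so the uniform ``closed volume less than $2$'' property has to be verified separately there, exploiting the fact that items in that interval have size less than $1/k\leq\tfrac12$ so any closed bin has volume at most $1+1/k<2$. Once the three liminfs are pinned to $\tfrac12$, the claimed ratio of $1$ follows directly from the definition of $\MINV{\cdot}$.
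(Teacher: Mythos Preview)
Your proposal is correct and follows essentially the same approach as the paper: the same adversarial family of items of size close to $1$ for the upper bound, and the same ``every closed bin has volume less than $2$'' argument for the lower bound. The paper's version is terser---it establishes only the upper bound for \OPT and the lower bound for \DNF and \DHk, then implicitly uses $\min_I \ALG(I) \leq \min_I \OPT(I)$ to pin both liminfs to $\tfrac12$---whereas you compute all three liminfs directly, but the substance is the same.
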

\begin{proof}
For any $\varepsilon >0$, a sequence consisting only of items of size
$1-\varepsilon$ will force any algorithm, including \OPT, to put at
least two items in each bin. As $\varepsilon$ tends to 0, this gives
an upper bound on the number of covered bins tending to $\vol{I}/2$.
Since both \DNF and \DHk always cover at least $\lfloor \vol{I}/2
\rfloor$ bins, this shows that their min/min ratios are $1$.
%
\end{proof}

If the item sizes are restricted to an interval 
$(a,b)\subseteq (0,1)$ containing at least one \DHk interval border,
the min/min ratio can distinguish between \DNF and \DHk.
If $(a,b)$ does not contain at least one of the interval borders
used by \DHk, then \DHk packs exactly like \DNF.

If $(a,b)$ contains a \DHk border, then we define, as in Section~\ref{competitive-analysis}, 
$\frac{1}{\p}$ as the {\em maximal border in $(a,b)$}.
Throughout the paper, we assume that the constants $k$, $a$, $b$, and $p$
have the meaning defined above.

\begin{theorem}
\label{theorem-max-max-DHk}
With item sizes in $(a,b)\subseteq (0,1)$, where $a < \frac{1}{\p}$,
\DHk has a min/min ratio of $1$.
\end{theorem}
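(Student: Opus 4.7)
The plan is to identify a common value $L$ such that both liminfs in the min/min ratio converge to $L$, so their ratio is $1$.

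Define $L = \inf_{s \in (a,b)} \frac{1}{\lceil 1/s \rceil \cdot s}$. The central observation, generalizing that in the proof of Theorem~\ref{max-not-separate}, is that on a sequence consisting entirely of items of a single size $s \in (a,b)$, both \OPT and \DHk put exactly $\lceil 1/s \rceil$ items in each covered bin: any fewer items sum to less than $1$, and neither algorithm has any incentive to pack more. In particular, \DHk uses exactly $j$ items per bin when $s$ lies in category $j$ with $p \leq j \leq k$, and closes the open bin as soon as the running sum reaches $1$ when $s$ lies in the small category $(0, 1/k)$. Thus both algorithms achieve asymptotic rate $1/(\lceil 1/s \rceil \cdot s)$ on any uniform sequence.

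For the upper bound on both liminfs, fix $\eps > 0$ and pick $s^* \in (a,b)$ with $1/(\lceil 1/s^* \rceil \cdot s^*) < L + \eps$. Along the subsequence $v_N = N s^*$ with $N \to \infty$, the sequence of $N$ copies of $s^*$ has volume exactly $v_N$, and both algorithms cover $\lfloor N/\lceil 1/s^* \rceil \rfloor$ bins, so $\min_I \OPT(I)/v_N$ and $\min_I \DHk(I)/v_N$ are at most $1/(\lceil 1/s^* \rceil s^*) < L + \eps$. Hence both liminfs are at most $L + \eps$, and sending $\eps \to 0$ gives the upper bound $L$.

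For the lower bound, I show $\DHk(J) \geq L \cdot \vol{J} - O(1)$ for every sequence $J$ with items in $(a,b)$. Partition $J$'s items by \DHk category. For each category $j \in \{p, \ldots, k\}$ with $n_j$ items of total volume $v_j$, \DHk contributes $\lfloor n_j/j \rfloor \geq v_j \cdot c_j - 1$, where $c_j$ denotes the infimum of $1/(js)$ over sizes $s$ in category $j$'s slice of $(a,b)$; by construction $c_j \geq L$. For items in the small category $(0, 1/k) \cap (a,b)$, \DHk runs \DNF, so each closed bin has volume less than $1 + 1/k$, giving rate exceeding $k/(k+1)$. Since $a < 1/p$ allows $s \to 1/p^-$ in the infimum defining $L$, we have $L \leq p/(p+1)$; combined with $k \geq p$, this yields $k/(k+1) \geq p/(p+1) \geq L$, so small-category items also contribute at rate $\geq L$. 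Summing over all categories, $\DHk(J) \geq L \cdot \vol{J} - O(1)$. Since $\OPT \geq \DHk$ pointwise, the same bound holds for $\OPT$, so both liminfs are at least $L$.

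Combining the two bounds, both liminfs equal $L$, giving a min/min ratio of $1$. The main technical step is the lower bound on $\DHk$: ensuring the per-category infimum rates are uniformly at least $L$, with the small category $(0, 1/k)$ requiring separate treatment since \DHk does not use a fixed number of items per bin there.
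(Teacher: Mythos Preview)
Your proof is correct and follows essentially the same idea as the paper's: both arguments hinge on the fact that the worst-case sequences for \DHk are uniform (single-size) sequences, on which \OPT cannot improve. The paper's proof is a two-line sketch that simply asserts the worst-case sequences are those with items of size $b-\varepsilon$ or $\frac{1}{p}-\varepsilon$; your proof supplies the missing justification by defining $L=\inf_{s\in(a,b)}1/(\lceil 1/s\rceil s)$ and proving the per-category lower bound $\DHk(J)\geq L\cdot\vol{J}-O(1)$, including the separate treatment of the small category via the inequality $k/(k+1)\geq p/(p+1)\geq L$.
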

\begin{proof}
The worst-case sequences for \DHk consist of items only of size either $b-\varepsilon$ or $\frac{1}{\p}-\varepsilon$,
for any small $\varepsilon$, and, since there are no choices in packing
sequences with just one item size, \OPT cannot pack them better than \DHk.
\end{proof}

\begin{theorem}
\label{max-dnf}
With item sizes in $(a,b)\subseteq (0,1)$, where $\frac{1}{\p}\in (a,b)$,
\DNF has a min/min ratio of
$\max\SET{\frac{1+\frac{1}{\p}}{1+b},\frac{\p b}{1+b}}$.
\end{theorem}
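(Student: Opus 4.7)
The plan is to compute the numerator $\liminf_{v\to\infty}\min_{\vol{I}=v}\DNF(I)/v$ and the denominator $\liminf_{v\to\infty}\min_{\vol{I}=v}\OPT(I)/v$ of $\MINV{\DNF}$ separately, and then take their quotient. I expect the numerator to equal $\frac{1}{1+b}$, the denominator to equal $\frac{1}{\max\{\p b,\,1+\frac{1}{\p}\}}=\min\{\frac{1}{\p b},\frac{\p}{\p+1}\}$, and dividing them to give the claimed maximum.

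For the numerator, the lower bound is essentially immediate: every bin closed by \DNF has volume strictly less than $1+b$, because its pre-last items sum to less than $1$ (otherwise the bin would already have been closed) and its final item has size less than $b$. With at most one bin left open, this gives $\DNF(I)\geq(v-1)/(1+b)$. For the matching upper bound, I use the periodic sequence $\SEQ{\SEQ{\frac{1}{\p}-\delta}^{\p},\, b-\varepsilon}^n$ with small $\delta,\varepsilon>0$ (all items lie in $(a,b)$ because $a<\frac{1}{\p}<b$): within each period, the $\p$ small items fill the open bin to volume $1-\p\delta<1$, and the next item of size $b-\varepsilon$ then closes the bin with volume approaching $1+b$, so $\DNF(I)/v\to\frac{1}{1+b}$.

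For the denominator, the upper bound uses two single-size candidates. A sequence of items all of size $b-\varepsilon$ forces \OPT to use exactly $\p$ items per covered bin (since $\p(b-\varepsilon)>1$ while $(\p-1)(b-\varepsilon)<1$, using $\frac{1}{\p}<b\leq\frac{1}{\p-1}$), giving $\OPT(I)/v\to\frac{1}{\p b}$. A sequence of items all of size $\frac{1}{\p}-\delta$ forces \OPT to use exactly $\p+1$ items per bin, giving $\OPT(I)/v\to\frac{\p}{\p+1}$. The main obstacle is the matching lower bound: I must show that for every sequence with items in $(a,b)$, \OPT covers at least $v/\max\{\p b,\,1+\frac{1}{\p}\}-O(1)$ bins. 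I achieve this by exhibiting a single packing strategy whose every covered bin has volume at most $\max\{\p b,\,1+\frac{1}{\p}\}$, namely reverse-sorted greedy: sort the items of $I$ in decreasing order of size, then run \DNF on the sorted sequence. Consider any bin it closes, and let $L$ be the size of the last (and hence smallest, by the sorting) item placed in it. If $L>\frac{1}{\p}$, every item in the bin exceeds $\frac{1}{\p}$; the pre-last items sum to less than $1$ while each exceeds $\frac{1}{\p}$, so there are at most $\p-1$ of them, and since every item has size less than $b\leq\frac{1}{\p-1}$ any $\p-1$ items sum to strictly less than $1$, forcing exactly $\p$ items per bin and giving volume at most $\p b$. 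If $L\leq\frac{1}{\p}$, the bin's volume equals the pre-sum plus $L$, hence is strictly less than $1+\frac{1}{\p}$. In either case the volume per bin is at most $\max\{\p b,\,1+\frac{1}{\p}\}$, so this packing (and therefore \OPT) covers at least $(v-1)/\max\{\p b,\,1+\frac{1}{\p}\}$ bins.

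Combining the two parts yields $\MINV{\DNF}=\dfrac{1/(1+b)}{1/\max\{\p b,\,1+\frac{1}{\p}\}}=\dfrac{\max\{\p b,\,1+\frac{1}{\p}\}}{1+b}=\max\left\{\dfrac{\p b}{1+b},\,\dfrac{1+\frac{1}{\p}}{1+b}\right\}$, as claimed.
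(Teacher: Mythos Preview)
Your proof is correct and follows the same overall approach as the paper: compute the \DNF numerator and the \OPT denominator separately using the same worst-case constructions (items of size $\frac{1}{\p}-\delta$ followed by $b-\varepsilon$ for \DNF; single-size sequences $b-\varepsilon$ or $\frac{1}{\p}-\delta$ for \OPT), then divide.

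Where you go beyond the paper is in rigor. The paper only exhibits the bad sequences and asserts (informally) that they are worst-case; it never proves the matching lower bounds. You supply both: the easy observation that every \DNF-closed bin has volume less than $1+b$, and, more substantially, the reverse-sorted greedy argument showing that \OPT can always pack so that every covered bin has volume at most $\max\{\p b,\,1+\frac{1}{\p}\}$. That second argument---splitting on whether the last item in a bin exceeds $\frac{1}{\p}$ and using $b\leq\frac{1}{\p-1}$ to cap the large-item case at exactly $\p$ items---is a genuine addition, and it closes a gap the paper leaves open.
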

\begin{proof}
To maximize the overpacking by \DNF, the last item of each bin
should have size close to $b$ and be packed in a nearly full bin.
Thus, we arrange that each bin gets $\p$ items of size $\frac{1}{\p}-\varepsilon$
for some $0 < \varepsilon < \frac{1}{\p}-a$, and then an item of size
$b-\eps$.
Each bin receives a volume of $1-\p\varepsilon+b-\eps$, so to use volume~$n$,
we repeat this $n/(1-(\p+1)\varepsilon+b)$ times to get a sequence $I_n$.
We may assume this is integral, since any rounding disappears in the limit,
$$\liminf\limits_{n\rightarrow\infty}\min\limits_{\vol{I} = n} \frac{\DNF(I_n)}{n}
=\frac{1}{1-(\p+1)\varepsilon+b}\,,$$ and, since we can use any
$\varepsilon$, $0 < \varepsilon < \frac{1}{\p}-a$,
we arrive at $\frac{1}{1+b}$.

The worst-case for \OPT follows by using one of the two types of sequences
from the proof of Theorem~\ref{theorem-max-max-DHk},
i.e., for each bin, $\p +1$ items of size $\frac{1}{\p}-\varepsilon$
or $\p$ items of size $b-\varepsilon$, for some $\varepsilon$.
Similar to the calculations above, and letting $\varepsilon$ approach zero,
the limit for \OPT becomes
$$\liminf\limits_{n\rightarrow\infty}\min\limits_{\vol{I} = n} \frac{\OPT(I_n)}{n} = \min\SET{\frac{1}{(\p+1)\frac{1}{\p}},\frac{1}{\p b}}
=\min\SET{\frac{1}{1+\frac{1}{\p}},\frac{1}{\p b}}\,.$$

Dividing the result for \DNF with the result for \OPT,
we get the stated ratio.
\end{proof}

Note that $\frac{1+\frac{1}{\p}}{1+b}<1$ is equivalent to
$\frac{1}{\p}<b$, which follows from the definition and maximality of
$\frac{1}{\p}$.
Furthermore, $\frac{\p b}{1+b}<1$ is equivalent to $b<\frac{1}{\p -1}$,
which is satisfied as long as $b$ is not equal to $\frac{1}{\p -1}$.
Thus, according to min/min analysis, \DHk is better than \DNF when
item sizes are restricted to an interval $(a,b) \in (0,1)$ containing
at least one \DHk border, and $b\not=\frac{1}{\p -1}$ where $\frac{1}{\p}$ is the
maximal border.

\section{Uniform Distribution}

In this section, we study the expected performance ratio of \DNF and \DHk on sequences
containing items drawn uniformly at random from the interval~$(0,1)$.

The expected performance ratio $\ERU{\ALG}$ is the ratio between the expected performance of 
the algorithms \ALG and \OPT on sequences of length $n$, containing items drawn 
uniformly at random from the interval $(0,1)$:

$$\ERU{\ALG} = \lim_{n \rightarrow \infty}
  \frac{\EXPDIST{I\in U_n(0,1)}{\ALG(I)}}{\EXPDIST{I\in U_n(0,1)}{\OPT(I)}}.$$

\begin{theorem}
\label{theorem-uniform}
 On a sequence containing items drawn uniformly at random from the
 interval $(0,1)$, 
 \begin{align*}
 & \ERU{\DHtwo} = \frac{1}{2}+\frac{1}{e^2-e}\approx 0.7141 \text{ and }\\
 & \lim_{k \rightarrow \infty} \ERU{\DHk} = \frac{12-\pi^2}{3} \approx 0.7101\,.
 \end{align*}

\end{theorem}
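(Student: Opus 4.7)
The plan is to split \DHk into its $k$ independent per-bucket subproblems and compute each contribution separately. Under the uniform distribution on $(0,1)$, each item independently lies in the interval $[\frac{1}{j},\frac{1}{j-1})$ with probability $\frac{1}{j(j-1)}$ for $2\leq j\leq k$, and in $(0,\frac{1}{k})$ with probability $\frac{1}{k}$; conditioned on the containing interval, the item's size is uniform on that interval. This independence is what allows a bucket-by-bucket analysis, since \DHk only mixes items within a single interval.

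For the denominator, I first establish that $\EXPDIST{I \in U_n(0,1)}{\OPT(I)} = \frac{n}{2} - o(n)$. The upper bound is immediate from $\OPT(I)\leq\vol{I}$ together with linearity of expectation. For the lower bound, I appeal to the existence of an asymptotically optimal online algorithm for uniformly distributed items established in~\cite{C88}, whose expected output bounds $\EXPDIST{I\in U_n(0,1)}{\OPT(I)}$ from below and approaches $n/2$.

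For each $j\in\{2,\ldots,k\}$, the count $B_j$ of items in bucket $j$ is binomial with mean $n/(j(j-1))$, and \DHk covers exactly $\lfloor B_j/j\rfloor$ bins from the bucket; hence bucket $j$ contributes $n/(j^2(j-1))+O(1)$ bins in expectation. For the smallest bucket $(0,1/k)$, items are uniform on $(0,1/k)$ and \DHk runs \DNF-style within the bucket; by rescaling, the number of items packed into one covered bin has the same distribution as $T_k:=\inf\{m:Y_1+\cdots+Y_m\geq k\}$ with $Y_i$ i.i.d.\ uniform on $(0,1)$. The elementary renewal theorem then gives an expected contribution of $n/(k\,E[T_k])+o(n)$ bins. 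For $k=2$, the Irwin--Hall identity yields $P(Y_1+\cdots+Y_m\leq 2)=(2^m-m)/m!$ for $m\geq 2$, so
\[
E[T_2]=2+\sum_{m=2}^\infty\frac{2^m-m}{m!}=(e^2-3)-(e-1)+2=e^2-e.
\]
Combining the big-bucket contribution ($n/4$ bins) with the small-bucket contribution ($n/(2(e^2-e))$ bins) and dividing by $n/2$ gives $\ERU{\DHtwo}=\frac{1}{2}+\frac{1}{e^2-e}$.

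For $k\to\infty$, I can bypass the renewal computation in the small bucket: its expected volume is at most $(n/k)\cdot(1/k)=n/k^2$, so it covers at most $n/k^2$ bins and contributes only $O(1/k^2)$ to $\ERU{\DHk}$. Hence
\[
\lim_{k\to\infty}\ERU{\DHk}=\sum_{j=2}^\infty\frac{2}{j^2(j-1)},
\]
and the partial fraction $\frac{1}{j^2(j-1)}=\frac{1}{j-1}-\frac{1}{j}-\frac{1}{j^2}$ turns this into $2\bigl(1-(\frac{\pi^2}{6}-1)\bigr)=\frac{12-\pi^2}{3}$, via telescoping and the Basel identity. The main obstacle is a clean rigorous control of the $o(n)$ error terms: I need concentration for the binomial bucket sizes, the strong law for the renewal count inside the small bucket for $k=2$, and, when passing to $k\to\infty$, an interchange of the $n\to\infty$ limit with the bucket sum that is justified by the uniform $O(1/j^2(j-1))$ bound on each summand.
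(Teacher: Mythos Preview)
Your proposal is correct and follows the same decomposition as the paper: split \DHk by bucket, use that $\EXPDIST{I\in U_n(0,1)}{\OPT(I)}\sim n/2$, compute the contribution $n/(j^2(j-1))$ from each bucket $j\in\{2,\dots,k\}$, handle the small bucket $(0,1/k)$ separately as a \DNF instance, and then sum. The differences are in how the ingredients are obtained rather than in the architecture. The paper cites the pairing heuristic of~\cite{CFGK91} for $\EXP{\OPT}=n/2$, whereas you bound it between the volume and the output of the asymptotically optimal algorithm of~\cite{C88}; the paper quotes the closed form $\mu(k)=\sum_{l=1}^{k}e^l(-l)^{k-l}/(k-l)!$ from~\cite{CFGK91} for the small bucket, whereas you rederive $E[T_2]=e^2-e$ directly from the Irwin--Hall distribution and the elementary renewal theorem; and the paper routes the series $\sum_{i=2}^{k}1/(i^2(i-1))$ through the trigamma function, whereas you telescope and invoke the Basel identity. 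Your treatment is somewhat more self-contained for the two quantities actually needed ($k=2$ and $k\to\infty$), while the paper's trigamma formulation additionally yields a closed form for $\ERU{\DHk}$ at every finite $k$.
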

\begin{proof}
For sequences of length $n$ drawn uniformly at random from the interval $(0,1)$,
 \begin{align*}
 \ERU{\DHk}
& = \lim_{n \rightarrow \infty}
  \frac{\EXPDIST{I\in U_n(0,1)}{\DHk(I)}}{\EXPDIST{I\in U_n(0,1)}{\OPT(I)}} \\
& = \lim_{n \rightarrow \infty}
  \frac{\EXPDIST{I\in U_{\frac{(k-1)n}{k}}[\frac{1}{k},1)}{\DHk(I)}
        +\EXPDIST{I\in U_{\frac{n}{k}}(0,\frac{1}{k})}{\DHk(I)}}
       {\EXPDIST{I\in U_n(0,1)}{\OPT(I)}} \\
& = \RHar + \RDNF \,,
 \end{align*}
where the second equality follows from the fact that \DHk processes
items smaller
than $\frac{1}{k}$ separately from items of size at least
$\frac{1}{k}$. Thus, these items can be treated separately.
Since item sizes are chosen uniformly at random and
the result of \DHk depends linearly on the number of items in each interval,
this corresponds to scaling $n$ using $\frac{k-1}{k}$ and $\frac{1}{k}$,
respectively. 

The final equality just defines the following two expressions as
 $$\RHar = \lim_{n \rightarrow \infty}
  \frac{\EXPDIST{I\in U_{\frac{(k-1)n}{k}}[\frac{1}{k},1)}{\DHk(I)}}
       {\EXPDIST{I\in U_n(0,1)}{\OPT(I)}}$$
and
$$   \RDNF = \lim_{n \rightarrow \infty}
  \frac{\EXPDIST{I\in U_{\frac{n}{k}}(0,\frac{1}{k})}{\DHk(I)}}
       {\EXPDIST{I\in U_n(0,1)}{\OPT(I)}}\,.$$

Using a pairing heuristic, \cite{CFGK91} shows that  
 $\EXPDIST{I\in U_n[0,1)}{\OPT(I)} = \frac{n}{2}$.

For a sequence with items drawn uniformly at random from $(0,1)$,
 the expected number of items with sizes
 in the interval $[\frac{1}{i},\frac{1}{i-1})$ is
 $(\frac{1}{i-1}-\frac{1}{i})n = \frac{n}{i(i-1)}$.
For $2 \leq
 i \leq k$, \DHk packs each of these items (except for at most $i-1$ items) 
 in bins with exactly $i$ items each.
Hence, the expected number of bins that \DHk
 covers with such items is more than
 $\frac{n}{i^2(i-1)}-1$, $2 \leq i \leq k$.

Hence,
 \begin{align*}
   \RHar
 & = \lim_{n \rightarrow \infty} 
     \frac{\sum\limits_{i=2}^k \left(\frac{n}{i^2(i-1)}-1\right)}
          {n/2}
   = 2 \sum\limits_{i=2}^k \frac{1}{i^2(i-1)}\,.
 \end{align*}
Using partial fraction decomposition, we get
 \begin{align*}
   \RHar
 & = 2 \sum\limits_{i=2}^k  \left(\frac{1}{i-1} - \frac{1}{i} -
     \frac{1}{i^2}\right)\\ 
 & = 2 \left( \sum\limits_{i=1}^{k-1}\frac{1}{i} -  
     \sum\limits_{i=2}^k \frac{1}{i} -
     \sum\limits_{i=1}^k \frac{1}{i^2} + 1 \right) \\
 & = 2 \left( 1 - \frac{1}{k} - 
     \sum\limits_{i=0}^{k-1}\frac{1}{(i+1)^2} + 1 \right) \\
 & = 2 \left( 2 - \frac{1}{k} - 
     \sum\limits_{i=0}^\infty \frac{1}{(i+1)^2} +
     \sum\limits_{i=0}^\infty \frac{1}{(i+1+k)^2} \right)\\
 & = 2 \left( 2 - \frac{1}{k} - \gf(1) + \gf(k+1) \right)\,,
 \end{align*}
 where \gf is the trigamma function~\cite{AS64b}.
Some properties of \gf are that $\gf(1)=\frac{\pi^2}{6}$, $\gf(k+1)
 = \gf(k)- \frac{1}{k^2}$, and $\gf(k) \rightarrow 0$ as $k
 \rightarrow \infty$.
Now,
 \begin{align*}
   \RHar
 & = 2\left(2 - \frac{1}{k} - \frac{\pi^2}{6} + \gf(k)-\frac{1}{k^2}\right)\\
 & = 2\left(\frac{12-\pi^2}{6} - \frac{1+k}{k^2} + \gf(k)\right).
 \end{align*}

Since $\RDNF \rightarrow 0$ as $k \rightarrow \infty$, it follows that 
 \begin{align*}
     \lim_{k \rightarrow \infty} \ERU{\DHk}
 & = \frac{12-\pi^2}{3} \approx 0.7101\,.
 \end{align*}

Since \DHk packs the items of sizes in $(0,\frac{1}{k})$ the same
way \DNF would,
we can use a result from~\cite{CFGK91}, stating that
\[
\lim\limits_{n\rightarrow\infty}
  \frac{\EXPDIST{I\in U_n[0,\frac{1}{k})}{\DNF(I)}}{n}
=
\frac{1}{\mu(k)}
\]
where
 $$\mu(k) 
 = \lim\limits_{\varepsilon \rightarrow 0}
   \sum\limits_{l=0}^k (-1)^l \frac{1}{l!}
      \left( \frac{k}{1-\varepsilon} - l \right)^l
      e^{\frac{k}{1-\varepsilon} -l}
 = \sum\limits_{l=1}^{k}  \frac{e^l (-l)^{k-l}}{(k-l)!}\,,$$
and the limit for $\varepsilon\rightarrow 0$ is due to our working with
an open interval, where the interval in~\cite[Eq.~(28)]{CFGK91} is closed.
Now,
 \begin{align*}
   \RDNF 
 & = \frac{1}{k}
     \lim\limits_{n\rightarrow\infty}
     \frac{\EXPDIST{I\in U_n[0,\frac{1}{k})}{\DNF(I)}}{n/2}
   = \frac{2}{\mu(k) k}\,.
 \end{align*}

Note that $\mu(2) = e^2-e$ 
and $\gf(2) = \gf(1) - \frac{1}{1^2} = \frac{\pi^2}{6} - 1$.

Hence, 
 \begin{align*}
   \ERU{\DHtwo} 
 & = \RHartwo + \RDNFtwo \\
 & = 2 \left( \frac{12-\pi^2}{6} - \frac{1+2}{2^2} + \gf(2) \right) + 
     \frac{2}{2\mu(2)}\\
 & = \frac{12-\pi^2}{3} - \frac{3}{2} + \frac{\pi^2}{3} - 2 + 
     \frac{1}{e^2-e} \\
 & = \frac{1}{2} + \frac{1}{e^2-e} \approx 0.7141.
 \end{align*}
\end{proof}

This should be compared with a result from~\cite{CFGK91}, showing
that on a uniform distribution,
\DNF has an expected performance ratio of $\frac{2}{e}\approx 0.7358$.
Thus, under this assumption, \DNF is a little better than \DHk.

\section{Concluding Remarks}
The starting point for this paper was the fact that bin covering
algorithms as different as \DNF and \DHk are not separated using
competitive analysis.
We are interested in the question of which algorithm to use in different
scenarios. \DHk was designed to guard against worst-case sequences,
and since these are often made up using pathological input, such as mixing very
large and very small items, we have carried out analyses using the
worst-case performance, but on restricted input of items of similar size.
The comparison is still in \DHk's favor, albeit less so.
Max/max analysis (under similar conditions) and relative worst order
analysis also point to \DHk.

In contrast, if input is not organized into worst-case sequences by
an adversary, we can show, by carrying out an analysis of the expected
results under a uniform distribution that \DNF performs a little
better than \DHk. This seems to be very robust, since adding a small
element of worst-case requirements in the form of random order analysis
also points to \DNF not being worse than \DHk. Thus, even if an adversary
gets to choose the worst sequence for the algorithm, just the fact that
the items are received in a random order removes
\DHk's advantage over \DNF.

The conclusion is that unless guarantees are desired or it is known that
items do not arrive in a random order, it is worth considering \DNF
as the algorithm of choice.

\DHk has a random order ratio of $\frac{1}{2}$,
which is worst possible, whereas the upper bound we have on
\DNF is $\frac{4}{5}$. We conjecture that these two algorithms
can be separated, proving \DNF to be best.
Though it is not essential to the conclusion above,
we leave this as an interesting open problem we would like to see solved,
and use the rest of this section to discuss some relevant issues
regarding this.
It seems intuitively almost obvious that
\DNF would always get a ratio larger than $\frac{1}{2}$.
The difficulty in establishing this formally stems from problems of
handling the size aspects using probability theory. In the hardest
case, there are a linear number of very large items such that if
they end up on top of each other pairwise, we get the ratio of $\frac{1}{2}$. 
Thus, we need to prove that some fraction of these large items do not end up pairwise
on top of each other.
The small items that would be packed with the large items in an
optimal packing can be cut into very small pieces so there are
orders of magnitude more small items than large items---but still of
possibly dramatically varying size, relatively.
Whereas we have strong theoretical tools for
bounding the deviation from the expected
number of items in certain locations in the form of
Chebyshev's inequality, for instance, it is much harder to reason
regarding deviations from the expected size, and it is exactly the sum
of sizes of small items surrounding a large item that decides whether or not
two large items end up on top of each other.

Results on the random order ratio are often difficult to establish.
This is reflected in the rather small
number of obtained results and also in published
results being far from tight. In the paper~\cite{K96}
introducing the random order ratio, for example,
the random order ratio of the bin
packing algorithm Best-Fit is shown to lie between 1.08 and 1.5.
An exceptionally tight result appears in~\cite{CCRZ08}, where it is shown
that the random order ratio of Next-Fit for bin packing is exactly~$2$.
Note, however, that this result does not give indication that the
random order ratio of \DNF for bin covering should be
$\frac{1}{2}$. The sequence establishing the lower bound of 2 consists of
$n$ items of size~$\frac{1}{2}$ and $kn$ items of size~$\epsilon <
\frac{1}{kn}$, for some large $k$. For a random ordering of these
items, each item of size $\frac{1}{2}$ has a high probability of
being combined with at least one of the small items, leaving too
little space in the bin for another large item. 
For bin covering, the problem is reversed;
to prove an upper bound of $\frac12$,
we must prove that each large item has a
significant probability of being surrounded by a sufficiently small
volume of small items so that it will go into the same bin as a
neighboring large item.
%

\bibliography{refs}
\bibliographystyle{plain}

\end{document}